\documentclass[11pt]{article}

\usepackage{amsmath,amssymb,amsthm,mathtools}

    \usepackage{thmtools}
    \usepackage{thm-restate}
	\usepackage{a4,geometry}

\usepackage{graphicx}
\usepackage{paralist}
\usepackage{bm}
\usepackage{xspace}
\usepackage{url}
\usepackage{fullpage, prettyref}
\usepackage{boxedminipage}
\usepackage{wrapfig}
\usepackage{ifthen}
\usepackage{color}
\usepackage[dvipsnames]{xcolor}
\usepackage[colorlinks,citecolor=blue,linkcolor=BrickRed]{hyperref}
\usepackage{framed}
\usepackage{cleveref}
\usepackage{algpseudocode}
\usepackage[ruled,vlined, linesnumbered]{algorithm2e}
\usepackage{titlesec}
\titlespacing*{\subsubsection}{0pt}{0.2em}{0.2em}

\usepackage{thmtools}
\usepackage{thm-restate}
\usepackage{cleveref}

\makeatletter
\@ifundefined{newcounteralias}{}{%
  \renewcommand\thmt@autorefsetup{%
    \@xa\def\csname\thmt@envname autorefname\@xa\endcsname
      \@xa{\thmt@thmname}%
  }%
}
\makeatother

\newtheorem{theorem}{Theorem}[section]

\newtheorem{lemma}[theorem]{Lemma}
\newtheorem{claim}[theorem]{Claim}

\newtheorem{remark}[theorem]{Remark}

\newcommand{\ignore}[1]{}

\newcommand{\R}{\mathbb R}

\newcommand{\eps}{\varepsilon}

\newcommand{\poly}{\mathsf{poly}}

\newcommand{\KS}{\mathsf{KS}}

\newcommand{\diag}{\mathop{\mathsf{diag}}}

\newcommand{\norm}[1]{\left\lVert #1 \right\rVert}

\newcommand{\E}{\mathbb{E}}

\newcommand\restr[2]{{
  \left.\kern-\nulldelimiterspace
  #1 
  \vphantom{\big|} 
  \right|_{#2} 
 }}

\newcommand{\Sec}[1]{\hyperref[sec:#1]{\S\ref*{sec:#1}}} 
\newcommand{\Eqn}[1]{\hyperref[eq:#1]{(\ref*{eq:#1})}} 
\newcommand{\Fig}[1]{\hyperref[fig:#1]{Fig.\,\ref*{fig:#1}}}
\newcommand{\Tab}[1]{\hyperref[tab:#1]{Tab.\,\ref*{tab:#1}}}
\newcommand{\Thm}[1]{\hyperref[thm:#1]{Theorem\,\ref*{thm:#1}}} 
\newcommand{\Fact}[1]{\hyperref[fact:#1]{Fact\,\ref*{fact:#1}}} 
\newcommand{\Lem}[1]{\hyperref[lem:#1]{Lemma~\ref*{lem:#1}}} 
\newcommand{\Prop}[1]{\hyperref[prop:#1]{Proposition~\ref*{prop:#1}}} 
\newcommand{\Cor}[1]{\hyperref[cor:#1]{Corollary~\ref*{cor:#1}}} 
\newcommand{\Conj}[1]{\hyperref[conj:#1]{Conjecture~\ref*{conj:#1}}} 
\newcommand{\Rem}[1]{\hyperref[rem:#1]{Remark~\ref*{rem:#1}}} 
\newcommand{\Def}[1]{\hyperref[def:#1]{Definition~\ref*{def:#1}}} 
\newcommand{\Alg}[1]{\hyperref[alg:#1]{Alg.~\ref*{alg:#1}}} 
\newcommand{\Ex}[1]{\hyperref[ex:#1]{Ex.~\ref*{ex:#1}}} 
\newcommand{\Clm}[1]{\hyperref[clm:#1]{Claim~\ref*{clm:#1}}} 
\newcommand{\Step}[1]{\hyperref[step:#1]{Step~\ref*{step:#1}}} 
\newcommand{\Obs}[1]{\hyperref[obs:#1]{Observation~\ref*{obs:#1}}} 
\renewcommand{\Alg}[1]{\hyperref[alg:#1]{Algorithm~\ref*{alg:#1}}} 

\graphicspath{{./Figures/}}

\newcommand{\Tr}{\operatorname{Tr}}

\RequirePackage[normalem]{ulem}

\title{Rank-One Matrix Discrepancy and Algorithmic Kadison--Singer}

\date{}

\author{Ekene Ezeunala\thanks{University of Chicago, Chicago, IL, USA. \texttt{ekene@uchicago.edu.}}
\and
Haotian Jiang\thanks{University of Chicago, Chicago, IL, USA. \texttt{jhtdavid@uchicago.edu}.}}

\begin{document}

\allowdisplaybreaks
\begin{titlepage}
\maketitle

\begin{abstract}
We give a deterministic polynomial-time algorithm that, given rational Hermitian matrices $H_1,\dots,H_N$ of rank at most one, finds signs $s\in\{\pm1\}^N$ with $\|\sum_i s_i H_i\|\le 13\|\sum_i H_i^2\|^{1/2}$. As a corollary, for vectors $v_i$ with $\sum_i v_iv_i^*=I$ and $\|v_i\|^2\le\delta$, the signs yield a partition $[N] = S_1 \cup S_2$ such that each part satisfies $\|\sum_{i \in S_j} v_i v_i^* - \frac{I}{2}\| \leq \frac{13}{2}\sqrt\delta$ for $j = 1,2$. This gives a deterministic polynomial-time algorithm for the Kadison--Singer problem, in Weaver's equivalent discrepancy-theoretic $\KS_2$ formulation, with a universal constant. 
\end{abstract}

\vspace{1cm}

{\hypersetup{linkcolor=BrickRed}
 \tableofcontents
}
 \thispagestyle{empty}
\end{titlepage}

\thispagestyle{empty}
\newpage
\setcounter{page}{1}

\section{Introduction}Let $v_1,\dots,v_N\in\mathbb C^d$ satisfy $\sum_iv_iv_i^*=I$ and $\|v_i\|^2\le\delta$ for every $i$. The Kadison--Singer problem 
\cite{KS59}, in the equivalent discrepancy-theoretic $\KS_2$ formulation of Weaver 
\cite{Wea04}, asks for a partition of $[N]$ into two parts $S_1 \cup S_2$ such that each one of the parts $\sum_{i\in S_j} v_iv_i^*$ is close to $\frac{1}{2}I$ within an error that depends only on $\delta$ for $j = 1,2$.

In breakthrough work, Marcus, Spielman, and Srivastava \cite{MSS15b} 
proved that a partition with $\smash{\|\sum_{i\in S_j}v_iv_i^*\|\le\frac{1}{2}(1+\sqrt{2\delta})^2}$ for both parts always exists. Their proof uses interlacing families of polynomials and does not yield an efficient algorithm. Jourdan, Macgregor, and Sun~\mbox{
\cite{JMS22} }\hskip0pt
formulated the algorithmic version as the problem $\KS_2(c)$, which asks for a polynomial time algorithm to find a partition with $\smash{\|\sum_{i\in S_j}v_iv_i^*-I/2\|\le c\sqrt\delta}$ for constant $c$.
The previous best algorithm for this problem, due to  Anari, Oveis Gharan, Saberi, and Srivastava \cite{AOSS18}, runs in time $2^{\widetilde{O}(n^{1/3})}$.

In this article, we give a deterministic polynomial-time algorithm for this problem with a universal constant. It is a special case of the following algorithmic rank-one matrix discrepancy result.
A stronger bound with the constant $4$ in the setting of \Cref{thm:signing} was shown in \cite{KLS20}, but their proof is also based on interlacing polynomials and does not yield an efficient algorithm.  

\begin{theorem}[Algorithmic rank-one matrix discrepancy]\label{thm:signing}
There is a deterministic polynomial-time algorithm that, given rational Hermitian matrices $H_1,\dots,H_N\in\mathbb C^{d\times d}$ of rank at most one, outputs signs $s\in\{\pm1\}^N$ satisfying \[
\Big\|\sum_{i=1}^Ns_iH_i \Big\|\le13 \Big\|\sum_{i=1}^NH_i^2 \Big\|^{1/2}.
\]
\end{theorem}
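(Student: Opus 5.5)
Write $H_i=\lambda_i u_iu_i^*$ with $\|u_i\|=1$, so $H_i^2=\lambda_i^2u_iu_i^*$. Set $\sigma^2=\|\sum_i H_i^2\|$, and rescale so that $\sigma=1$. The natural tool is a deterministic, potential-function-driven \emph{partial coloring} scheme in the style of Spencer/Bansal discrepancy walks, with the potential being a two-sided barrier (Batson--Spielman--Srivastava) for $\sum_i x_iH_i$. We maintain fractional colors $x\in[-1,1]^N$ starting from $x=0$. Each coordinate freezes once it reaches $\pm1$. The invariant is that
\[
-(u\!-\!\text{barrier})\,I\prec M(x)=\sum_i x_iH_i\prec(u\!-\!\text{barrier})\,I,
\]
controlled by $\Phi(x)=\Tr(uI-M)^{-1}+\Tr(uI+M)^{-1}$. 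Since $d$ could exceed $N$, we first restrict to the span of the $u_i$ and weight the potential by $W=\sum_i H_i^2$. Concretely, we work in $W^{-1/2}$-coordinates, or equivalently use $\Tr(W(uI-M)^{-1})$-type potentials. This makes the initial potential $O(1)$ rather than $O(d)$, which is what gives a dimension-free constant.

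The plan has three steps.

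\textbf{Step 1 (walk step).} For a small step $\eta$, choose a direction $y$ supported on alive coordinates. We require $y$ to be orthogonal to the gradient of $\Phi$, to the second-order ``variance'' directions of the largest barrier terms, and to a subspace forcing progress, in the manner of Lovett--Meka or Bansal's SDP-free walks. Each constraint removes one dimension. As long as the number of alive coordinates exceeds a constant times the number of constraints, such a $y$ exists and can be found by linear algebra. Because each $H_i$ is rank one, the second-order change in $\Phi$ is
\[
\sum_{i,j} y_iy_j \Tr\bigl(H_iR H_jR^2\bigr),\qquad R=(uI-M)^{-1}.
\]
It is bounded by a quadratic form in $y$ whose trace over alive coordinates is $\sum_i u_i^*R^2u_i\,u_i^*Ru_i\,\lambda_i^2$. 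After orthogonalizing against the top eigendirections of this quadratic form, it is at most a constant times $\Phi$'s derivative budget divided by the number of alive coordinates.

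\textbf{Step 2 (barrier schedule).} Raise $u$ at a rate proportional to the fraction of coordinates that have frozen. A BSS-style calculation, Sherman--Morrison applied to rank-one updates, then shows the potential never increases, so the barrier is never crossed. In the rank-one setting the upper and lower barrier computations are symmetric because $x$ ranges over $[-1,1]$.

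\textbf{Step 3 (rounding).} Once only $O(1)$ alive coordinates remain, or after $O(\log N)$ halving phases, round the leftovers arbitrarily. Their total contribution is bounded because each $\|H_i\|=|\lambda_i|\le\|W\|^{1/2}=1$. Summing a geometrically decreasing barrier increment over the phases yields the constant $13$.

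The main obstacle is Step 1's second-order control without any dimension dependence. Naively the error terms scale with $\Tr R$, which can be of order $d$. The fix I would try first is to measure everything relative to $W$: replace $R$ by $W^{1/2}RW^{1/2}$, and use that
\[
\sum_i \lambda_i^2\,u_i^*Au_i=\Tr(WA)\le\|A\|_{\mathrm{op}}\cdot\Tr(W\cdot)\text{-type bounds},
\]
so the potential counts ``effective dimension'' $\sum_i$ leverage rather than $d$. Getting the phase schedule so the barrier increments sum to a constant, rather than $\sqrt{\log}$, is the delicate quantitative part. Rationality of the inputs guarantees that all linear-algebra steps and the walk can be carried out exactly in polynomial bit complexity, with $\eta=1/\poly$ and polynomially many steps.
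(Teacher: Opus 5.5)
Your proposal has a genuine gap, located exactly at what you call ``the delicate quantitative part'': nothing makes the barrier increments decay geometrically over the halving phases, and in general they do not. Keep your normalization $\sigma=1$. In a BSS-type analysis, keeping $\Tr(uI-M)^{-1}$ from increasing requires the gain $\Delta u\cdot\Tr R^2$ to pay for the accumulated second-order term. Over a phase in which $\sum_i x_i^2$ grows by $\Theta(n)$, your own trace bound makes that term of order $\sum_{i \text{ alive}}\lambda_i^2(u_i^*Ru_i)(u_i^*R^2u_i)\le\|R\|\Tr(W_{\mathrm{al}}R^2)$, where $W_{\mathrm{al}}=\sum_{i \text{ alive}}H_i^2$. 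So a phase can cost $\Delta u$ of order $\|R\|\,\|W_{\mathrm{al}}\|\approx\|W_{\mathrm{al}}\|/g$, with $g$ the barrier gap.

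Nothing forces $\|W_{\mathrm{al}}\|$ to shrink when half the coordinates freeze. Your walk does not choose which coordinates freeze, and the survivors can carry the heavy matrices or concentrate on a subspace. Spencer-style geometric decay comes from per-phase bounds that depend on the \emph{number} of alive coordinates; the operator-norm size of the alive part has no such dependence. Each of the $\Theta(\log N)$ phases can therefore cost $\Theta(1/g)$, and balancing $g$ against the number of phases gives a $\sqrt{\log N}$-type bound of matrix-Chernoff quality, not a constant.

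Several other steps share the same problem. The schedule in Step 2 (raise $u$ in proportion to the fraction frozen) is not what the Sherman--Morrison computation demands: the required motion is tied to the weighted variance on alive, possibly heavy, coordinates, not to their count. The $W$-weighting removes $d$ from the initial potential but does nothing about this accumulation. The constant $13$ is asserted, not computed.

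The missing idea is a potential that \emph{pre-pays} for the future randomness of every unfinished coordinate, so that no per-phase accounting is needed. The paper's $R(x)$ is the value of a convex program with constraints $X^{-1}+S(x)+\eta_x(Y)\preceq tI$ and $Y^{-1}-S(x)+\eta_x(X)\preceq tI$. These contain reservoirs $\eta_x(Z)=c\sum_i\psi_iA_iZA_i$ weighted by the remaining budgets $\psi_i=(1-x_i^2)^{1/3}$. This is a semicircular-type spectral edge rather than a trace barrier, and already $R(x^0)\le 2\sqrt{42\nu}$ with no logarithm.

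The paper then shows that $\Psi=R+\lambda\Phi$ can always be decreased, so the final bound is simply $R(x^f)\le\Psi(x^0)\le 13\sqrt{\nu}$. There are two cases:
\begin{itemize}
\item If a reservoir is heavy ($c\psi_ib_i\ge 1-x_i$ or $c\psi_ia_i\ge 1+x_i$), jumping $x_i$ to an endpoint keeps $(t,X,Y)$ feasible.
\item Otherwise the second variation of $R$ contains a negative term $-\Omega(c)\sum_i(-\psi_i'')h_i^2(b_ip_i+a_iq_i)$ from the concavity of the budgets. This term beats the inversion energy $E_h$ after three ingredients: rescaling by $T_i$; a drift that cancels the mismatch $(\tilde a_i-\tilde b_i)(z_i-w_i)$; and Bansal's covariance lemma applied on a subspace, of codimension only $O(n/c)$, on which the interaction matrices are contractions.
\end{itemize}
Derandomization comes from comparing endpoint, gradient-coordinate and Hessian-eigenvector moves, not from a barrier schedule.
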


For the $\KS_2$ problem above,  the identity $(v_iv_i^*)^2=\|v_i\|^2v_iv_i^*$ gives $\sum_i(v_iv_i^*)^2\preceq\delta I$. Applying \Cref{thm:signing} to $H_i=v_iv_i^*$, the sign class $S_1 :=\{i:s_i=1\}$ and $S_2 := \{i:s_i=-1\}$ satisfies
\[ \Big\|\sum_{i\in S_j}v_iv_i^*-\frac{1}{2}I \Big\|
= \frac{1}{2} \Big\|\sum_is_iv_iv_i^* \Big\|\le \frac{13}{2}\sqrt{\delta},
\]
where $j = 1, 2$.
Thus the algorithm solves $\KS_2(13/2)$ in the notation of~\cite{JMS22}. We have not attempted to optimize the constant, and the existence bound of  \cite{MSS15b} remains stronger.

\smallskip
\noindent \textbf{Overview.}
We may assume via a normalization (see \eqref{eq:normalization}) that the input consists of PSD rank-one matrices $A_i$ with $\Tr\sum_iA_i=1$, and the task is to control $S(x)=\sum_ix_iA_i$ along a walk of a fractional signing $x\in[-1,1]^N$ from $x=0$ to a vertex, guided by a potential function. While potential-based algorithms have been used extensively in discrepancy theory (e.g., \cite{BLV22,PV23,BJ25b,AS26}), the potential function and the steps used here are more involved.

The potential function is composed of two parts: a matrix part $R(x)$ and a coordinate part where each active coordinate carries a budget $\psi_i(x) =(1-x_i^2)^{1/3}$, which vanishes once the coordinate is frozen at $\pm1$. The matrix part $R(x)$ is the optimal value of a semidefinite program in $t\in\R$  and matrix variables $X,Y\succ0$ whose two constraints,
\[
X^{-1}+S(x)+c\sum_i\psi_i(x) A_iYA_i\preceq tI,\qquad
Y^{-1}-S(x)+c\sum_i\psi_i(x) A_iXA_i\preceq tI,\]
control the spectrum of $S(x)$. In particular $R(x)\ge\|S(x)\|$ and $R(0)=O(\|\sum_iA_i^2\|^{1/2})$. The terms involving $\psi_i(x)$ act as reservoirs: each holds a PSD slot along $A_i$, weighted by the resolvent of the opposite constraint, that is released when $x_i$ moves. The potential function used by the algorithm is of the form $\Psi(x)=R(x)+\lambda\sum_i\psi_i(x)$. At each step, the algorithm finds a deterministic move that doesn't increase $\Psi(x)$, so the final discrepancy is bounded by the initial potential. 

The main step of the analysis (\Cref{lem:potential_decrease}) shows that every state $x \in [-1,1]^N$ with an active coordinate admits a decreasing move. If some reservoir, i.e., $\psi_i(x) A_i Y A_i$ or $\psi_i(x) A_i X A_i$ in $R(x)$, is {\em heavy} enough to absorb the jump of an active coordinate $i$ to an endpoint $\pm 1$, that jump keeps the current $(t,X,Y)$ feasible and removes $\lambda\psi_i(x)$ from $\Psi(x)$, and hence reduces the potential.

Otherwise every active coordinate is {\em light} in both constraints. Rank-one identities then express the response of $X$ and $Y$ to a local step as a coupled linear system on the active coordinates, and the covariance lemma of Bansal  \cite{Ban24}, applied on a subspace chosen to control the curvature of matrix inversion, gives a drift and a covariance (for a random step) under which the concavity of the budgets outweighs the cost of inversion. 
The existence of a good random step is used only in the analysis: when the gradient is small, it forces a negative eigenvalue of the Hessian.

The algorithm therefore compares finitely many candidates at each step, namely the endpoint moves of active coordinates, a coordinate step against the gradient, and a step along a minimum-eigenvalue direction of the Hessian, and takes the best one. This ensures that every iteration decreases the potential by an inverse-polynomial amount, which bounds the number of iterations. 

\section{Preliminaries}\label{sec:prelims}
We write $\norm{\cdot}$ for the operator norm of a matrix or the Euclidean norm of a vector, $\norm{\cdot}_{\mathrm F}$ for the Frobenius norm, and $\langle U,V\rangle=\Tr(UV)$ on Hermitian matrices. 
Matrix calculations take place over the real vector space of Hermitian matrices.

For $A=vv^*\succeq0$ of rank one,
\begin{equation}\label{eq:rank-one}
 A^2=\Tr(A)A,\qquad AZA=\Tr(AZ)A,\qquad
 |v_i^*Xv_j|^2\le(v_i^*Xv_i)(v_j^*Xv_j)\quad(\text{for }X\succ0).
\end{equation}
Every quantity required by the algorithm can be computed from the input matrices. We use matrix convexity of inversion and the Schur-complement equivalence
\[
 X^{-1}\preceq D\quad\Longleftrightarrow\quad
 \begin{pmatrix}D&I\\I&X\end{pmatrix}\succeq0\qquad(\text{for }X\succ0).
\]
We will also use the following covariance lemma, which is a form of 
\cite[Theorem 2.5]{Ban24}. For a matrix $U$, we denote by $\diag(U)$ its diagonal part.
\begin{lemma}\label{lem:covariance}
If $L\subseteq\R^m$ has dimension $\ell$ and $\kappa>1$, there is a matrix $U\succeq0$ such that
\[
 \operatorname{range}(U)\subseteq L,\qquad U_{ii}\le1 \text{ for all $i \in [m]$},\qquad
 U\preceq\kappa\diag(U),\qquad \Tr U\ge\ell-m/\kappa.
\]
\end{lemma}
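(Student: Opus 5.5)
The plan is to obtain $U$ from semidefinite programming duality, in the spirit of \cite[Theorem 2.5]{Ban24}, rather than by an explicit construction. Let $P$ be the orthogonal projection onto $L$. Consider the primal program
\[
\max\ \Tr U \quad\text{s.t.}\quad U=PUP\succeq0,\qquad U_{ii}\le1\ (i\in[m]),\qquad \kappa\diag(U)-U\succeq0 .
\]
Its feasible set is compact, since $\Tr U=\sum_i U_{ii}\le m$. It also has a strictly feasible point in the relative sense: $U=\eps P$ satisfies $U_{ii}\le 1$ strictly for small $\eps>0$. So Slater's condition holds on the face $\{U=PUP\}$, and the optimum is attained and equals the dual optimum. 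It therefore suffices to show that every dual feasible point has value at least $\ell-m/\kappa$. The desired $U$ is then any primal optimizer.

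To write the dual, attach multipliers $y\in\R^m_{\ge0}$ to the constraints $U_{ii}\le1$ and $W\succeq0$ to $\kappa\diag(U)-U\succeq0$. Dual feasibility becomes
\[
P\bigl(I-\diag(y)+\kappa\diag(W)-W\bigr)P\preceq0 ,
\]
and the dual value is $\sum_iy_i$. I will pair this operator inequality with a well-chosen PSD test matrix $Z$ supported on $L$. Such a pairing gives
\[
\Tr Z\le\sum_i y_iZ_{ii}+\langle W,Z\rangle-\kappa\sum_iW_{ii}Z_{ii}.
\]
If $Z_{ii}\le1$ for all $i$, the first term on the right is at most $\sum_iy_i$. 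The goal is then to choose $Z$ so that $\Tr Z\ge\ell-(\text{error})$ and the $W$-terms together contribute at most $m/\kappa$, uniformly over all $W\succeq0$.

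The natural first choice is $Z=P$, for which $\Tr Z=\ell$ and $Z_{ii}=P_{ii}\le1$. For this choice the $W$-terms are controlled by the rank-one-type inequality $|W_{ij}|\le\sqrt{W_{ii}W_{jj}}$ (compare \eqref{eq:rank-one}) together with $|P_{ij}|\le\sqrt{P_{ii}P_{jj}}$. These give
\[
\langle W,P\rangle\le\Bigl(\sum_i\sqrt{P_{ii}W_{ii}}\Bigr)^2 .
\]
The main obstacle is that this square is only bounded by $m\sum_iP_{ii}W_{ii}$. That bound beats $\kappa\sum_iP_{ii}W_{ii}$ only when $\kappa\ge m$, so $Z=P$ alone is not enough. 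To fix this I would use a $W$-dependent test matrix instead. Specifically, take $Z=PDP$ for a diagonal $0\preceq D\preceq I$ that downweights coordinates where $W_{ii}$ is large relative to $P_{ii}$. Alternatively, take $Z=P$ restricted to $L\cap\{x_i=0:i\in B\}$ for a set $B$ of such bad coordinates; this loses at most $|B|$ in trace.

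The plan is to choose $B$ by a threshold so that two things hold at once. First, Cauchy--Schwarz on the remaining coordinates makes $\langle W,Z\rangle\le\kappa\sum_iW_{ii}Z_{ii}+(\text{slack})$. Second, the discarded trace is at most $m/\kappa$. Balancing the threshold should produce exactly the loss $m/\kappa$. Should this balancing prove delicate, the fallback is the iterative argument of \cite{Ban24}. That argument repeatedly shrinks $L$ by removing directions that violate $U\preceq\kappa\diag(U)$ and charges each removal to a $1/\kappa$ fraction of the total diagonal mass $\le m$. Since the lemma is stated as a form of \cite[Theorem 2.5]{Ban24}, one may also simply invoke that theorem after checking that its normalization matches the conditions stated here.
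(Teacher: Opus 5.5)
The paper gives no proof of this lemma. It quotes it as a reparametrization of \cite[Theorem 2.5]{Ban24}, so your last option, invoking that theorem, is exactly what the paper does. Your self-contained SDP argument, however, has a genuine gap at the one step that carries the content: the choice of the test matrix $Z$.

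Entrywise Cauchy--Schwarz, $\langle W,Z\rangle\le(\sum_i\sqrt{W_{ii}Z_{ii}})^2$, yields at best $\langle W,Z\rangle\le r\sum_iW_{ii}Z_{ii}$, where $r$ is the number of indices $i$ with $W_{ii}Z_{ii}\neq0$. Nothing better follows from entrywise information, since the bound is attained by matched rank-one $W$ and $Z$. So the factor $\kappa$ requires $r\le\kappa$, which for $W\succ0$ forces $\Tr Z\le\kappa$. Neither $Z=PDP$ nor deleting a coordinate set $B$ first changes this. A threshold on $W_{ii}/P_{ii}$ is also the wrong selection rule. For $W=\mathbf 1_S\mathbf 1_S^\top$ and $L=\operatorname{span}(\mathbf 1_S)\oplus L_0$ with $L_0\subseteq\R^{S^c}$, every ratio on $S$ equals $|S|$. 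The threshold therefore discards all of $S$, charging $|S|$ in your accounting, although only one direction is bad.

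There is also a smaller error. $\eps P$ is not a Slater point, because it can violate $U\preceq\kappa\diag(U)$: for $L=\operatorname{span}(\mathbf 1)$ and $\kappa<m$, the matrix $\kappa\diag(P)-P=(\kappa I-\mathbf 1\mathbf 1^\top)/m$ is indefinite. Take the strict feasibility from the dual side instead. The point $y=2\cdot\mathbf 1$, $W=\delta I$ with $(\kappa-1)\delta<1$ is strictly dual feasible. Together with feasibility of $U=0$ and compactness of the primal, this gives zero duality gap and primal attainment.

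The missing idea is spectral, not coordinatewise. Given a dual-feasible $(y,W)$, put $D=\diag(W)$ and $T=\{i:W_{ii}>0\}$; rows and columns of $W$ outside $T$ vanish.
\begin{itemize}
\item The matrix $\widetilde W=D_T^{-1/2}W_{TT}D_T^{-1/2}\succeq0$ has unit diagonal, so its trace is $|T|\le m$. Hence it has at most $m/\kappa$ eigenvalues $\ge\kappa$.
\item Let $V$ be the span of its remaining eigenvectors. Then $x^\top Wx\le\kappa\,x^\top Dx$ for all $x$ in $V'=\{x:D_T^{1/2}x_T\in V\}$, a subspace of codimension at most $m/\kappa$.
\item Let $Z$ be the orthogonal projection onto $L\cap V'$. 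It is PSD with range in $L$, $Z_{ii}\le1$, $\Tr Z\ge\ell-m/\kappa$, and $\langle W,Z\rangle-\kappa\sum_iW_{ii}Z_{ii}=\Tr(Z(W-\kappa D))\le0$.
\end{itemize}
Your paired inequality then gives $\sum_iy_i\ge\Tr Z\ge\ell-m/\kappa$ for every dual-feasible point, which closes your duality argument. Note that $Z$ depends on $W$ through a subspace that is not a coordinate subspace, which is exactly what the thresholding idea cannot capture.
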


\smallskip
\noindent \textbf{Input Normalization.} Let $H_i$ be the input matrices to \Cref{thm:signing}, which we may assume are all non-zero. We renormalize the matrices $H_i$ to PSD matrices $A_i$ as 
\begin{align} \label{eq:normalization}
A_i = \frac{\mathsf{sign}(\operatorname{Tr} H_i)H_i}{\sum_i |\operatorname{Tr} H_i|} ,
\end{align}
which satisfy $\Tr\sum_iA_i=1$. Set $\nu := \|\sum_i A_i^2\|$ (which we assume assume to be rational\footnote{The rationality assumption ensures that the algorithm has polynomial bit complexity. If $\nu$ is not rational, one may replace $\nu$ with any rational number between $[\nu, 1.001\nu]$.}). Then,
\begin{align*}
 1 \geq \nu \geq \big\|\sum_i A_i\big\|^2/N \geq \frac{1}{Nd^2} ,
\end{align*}
where the second inequality follows by expanding $\sum_i (A_i - (\sum_i A_i)/N)^2 \succeq 0$.

\section{Proof of Main Result}\label{sec:proof_main}
The algorithm in \Cref{thm:signing} rounds a fractional coloring $x \in [-1,1]^N$ to a $\pm 1$-signing. Throughout the rest of this article, we work with the normalized PSD matrices $A_1,\ldots,A_N$ from \eqref{eq:normalization}. We allow an arbitrary rational starting point $x^0 \in [-1,1]^N$ and write
\[
S(x) = \sum_{i=1}^N (x_i-x_i^0)A_i.
\]
Our goal is to reach a vertex with $\|S(x)\| = O(\sqrt{\nu})$; taking $x^0=0$ gives the \Cref{thm:signing}.

At each state $x \in [-1,1]^N$ (we will not include a subscript with time $t$ to simplify the notation), a coordinate is \emph{active} if $|x_i|<1$, and is {\em frozen} once it reaches an endpoint $\pm 1$. All derivatives below are taken with respect to the active coordinates, with the frozen coordinates fixed.

\subsection{Potential Function and Key Lemmas} 
\label{subsec:potential}

We measure the remaining fractionality of the current coloring $x \in [-1,1]^N$ using potential 
\begin{align} \label{eq:coordinate_potential}
\psi_i = \psi_i(x) = (1-x_i^2)^{1/3},
\qquad
\Phi(x) = \sum_i \psi_i(x) .
\end{align}
Clearly $\Phi(x) \ge 0$, with equality exactly at the vertices.

The same weights enter the matrix part of the potential through the positive linear map
\[
\eta_x(Z) = c\sum_i \psi_i A_i Z A_i.
\]
Take the parameters $c := 40$, $\varepsilon := \nu/d$, the matrix potential is defined as 
\begin{equation} \label{eq:matrix_potential}
\begin{aligned}
R(x) = \min_{t \in \mathbb{R},\, X,Y \succ 0}\quad
& t + \varepsilon\operatorname{Tr}(X+Y) \\
\text{subject to}\quad
& X^{-1} + S(x) + \eta_x(Y) \preceq tI, \\
& Y^{-1} - S(x) + \eta_x(X) \preceq tI .
\end{aligned}
\end{equation}
Here, the extra term $\varepsilon \Tr(X+Y)$ is crucial for analytic reasons: it is used to ensure that the optimal value is uniquely attained and the optimizer is smooth with $x$ in each open face of the cube.

The final potential used by the algorithm will be a weighted combination of \eqref{eq:coordinate_potential} and \eqref{eq:matrix_potential}: 
\begin{align} \label{eq:total_potential}
    \Psi(x) := R(x) + \lambda\Phi(x) ,
\end{align}
where $\lambda := \sqrt{\nu}/100N$ is assumed to be rational (if not, we will replace it with any rational number between $[\sqrt{\nu}/200N, \sqrt{\nu}/100N]$).

Note that the two constraints in the matrix potential \eqref{eq:matrix_potential} bound the positive and negative eigenvalues of $S(x)$, while also accounting for the unfinished coordinates through $\eta_x$; rounding a coordinate changes $S(x)$, but removes that coordinate's contribution to $\eta_x$. This is the exchange that drives the proof. The additional term $\lambda\Phi(x)$ ensures that freezing a coordinate gives a strict decrease in $\Psi(x)$ whenever $R(x)$ does not increase.

\smallskip
\noindent \textbf{Two Key Lemmas.} We first state two key lemmas and use them to prove \Cref{thm:signing} in \Cref{subsec:proof_main_thm}. The proof of these two key lemmas will be deferred to \Cref{sec:proof_key_lemmas}.

The first lemma shows how the matrix potential controls the spectrum of $S(x)$. 

\begin{restatable}[Matrix potential]{lemma}{MatrixPotential}\label{lem:matrix_potential}
For every $x \in [-1,1]^N$, the optimization problem defining $R(x)$ has a unique minimizer. The minimizer and $R(x)$ depend smoothly on $x$ within each open face of the cube, with the frozen coordinates fixed. Moreover, we have the bound
\[
\begin{aligned}
\|S(x)\| &\le R(x) \le \|S(x)\| + 2\sqrt{c\nu+2d\varepsilon} = \|S(x)\| + 2\sqrt{(c+2)\nu}, \\
\Psi(x^0) &\le \left(2\sqrt{42}+\frac{1}{100}\right)\sqrt{\nu}.
\end{aligned}
\]
\end{restatable}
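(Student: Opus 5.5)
Three things need proving: existence and uniqueness of a minimizer together with smooth dependence on $x$; the two-sided bound on $R(x)$; and the bound on $\Psi(x^0)$.

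\textbf{Existence and uniqueness.} We eliminate $t$. For fixed $X,Y\succ0$, the smallest feasible $t$ is
\[
t(X,Y)=\max\Big(\lambda_{\max}\big(X^{-1}+S+\eta_x(Y)\big),\ \lambda_{\max}\big(Y^{-1}-S+\eta_x(X)\big)\Big).
\]
Here $Z\mapsto Z^{-1}$ is matrix convex, $\eta_x$ is linear and positive, and $\lambda_{\max}$ is convex and monotone. So $t(X,Y)$ is convex, and the objective $F(X,Y)=t(X,Y)+\varepsilon\Tr(X+Y)$ is convex.

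For coercivity, observe that $t\ge\lambda_{\max}(X^{-1})-\|S\|$, which blows up as $X\to\partial$ of the PSD cone. The term $\varepsilon\Tr(X+Y)$ blows up as $\|X\|+\|Y\|\to\infty$. Hence minimizers exist and lie in a compact subset of the open cone.

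For uniqueness, take two minimizers $(t,X,Y)$ and $(t',X',Y')$ and their midpoint. If $X\ne X'$, strict convexity of inversion gives $\big(\tfrac{X+X'}2\big)^{-1}\prec\tfrac12(X^{-1}+X'^{-1})$. If also $Y\ne Y'$, both constraints at the midpoint hold with strict slack, so $t$ can be lowered and the objective strictly decreases, a contradiction. The mixed case, where $X\ne X'$ but $Y=Y'$, is subtler: only the first constraint gains slack, so $t$ cannot be lowered directly. I would handle it by shrinking $Y$ slightly, $Y\mapsto(1-\delta)Y$. This lowers $\varepsilon\Tr Y$ and lowers $\eta_x(Y)$ in the first constraint. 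It raises $Y^{-1}$ by $O(\delta)$, but that increase can be absorbed by the positive-definite gain $\eta_x(X)-\eta_x\big(\tfrac{X+X'}2\big)$, which is zero; so this does not close the case immediately.

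The cleaner route is to argue via the optimality (KKT) conditions. These give dual matrices $P,Q\succeq0$ with $\Tr(P+Q)=1$ and
\[
X^{-1}PX^{-1}=\varepsilon I+\eta_x^*(Q),\qquad Y^{-1}QY^{-1}=\varepsilon I+\eta_x^*(P).
\]
The $\varepsilon I$ terms force $P,Q\succ0$, so both constraints are tight on the full space, i.e.\ hold with equality. The strict slack produced by averaging then contradicts tightness.

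\textbf{Smoothness.} On an open face, $\psi_i$ is smooth in the active coordinates. Since both constraints are tight with full-rank multipliers, I would apply the implicit function theorem to the system
\[
X^{-1}+S+\eta_x(Y)=tI,\quad Y^{-1}-S+\eta_x(X)=tI,\quad \text{plus the dual equations}.
\]
Nondegeneracy of this system comes from strict convexity, specifically the positive-definite Hessian $Z\mapsto X^{-1}ZX^{-1}$ of the inversion terms. I expect this to be the main technical obstacle: one must verify that the Jacobian is invertible, and that the $\varepsilon$-regularization is what makes complementarity strict.

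\textbf{Bounds.} For the lower bound, the first constraint gives $S\preceq tI-X^{-1}-\eta_x(Y)\prec tI$. Symmetrically, the second gives $-S\prec tI$. Therefore $\|S\|\le t\le R$.

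For the upper bound, take $X=Y=\alpha I$. Then $\eta_x(\alpha I)=c\alpha\sum_i\psi_iA_i^2\preceq c\alpha\nu I$, using $\psi_i\le1$. Set $t=\|S\|+\alpha^{-1}+c\alpha\nu$. The objective is then
\[
\|S\|+\alpha^{-1}+\alpha(c\nu+2d\varepsilon).
\]
Optimizing at $\alpha=(c\nu+2d\varepsilon)^{-1/2}$ gives $\|S\|+2\sqrt{c\nu+2d\varepsilon}$. Substituting $\varepsilon=\nu/d$ yields $2\sqrt{(c+2)\nu}=2\sqrt{42\nu}$.

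Finally, at $x=x^0$ we have $S=0$ and $\Phi\le N$. Hence $\lambda\Phi\le\sqrt\nu/100$, which gives the stated bound on $\Psi(x^0)$.
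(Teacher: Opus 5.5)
Your bounds on $R$ and $\Psi(x^0)$ and your coercivity argument for existence are sound. Your final KKT-based uniqueness argument is also sound, and it is a legitimate variant of the paper's: the paper instead uses strict convexity of the Lagrangian $\Lambda(\cdot\,;P,Q)$ in $(X,Y)$ for fixed $P,Q\succ0$.

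One correction is needed there. The inequality $(\tfrac{X+X'}2)^{-1}\prec\tfrac12(X^{-1}+X'^{-1})$ is false as a strict Loewner inequality. The gap equals $\tfrac12(X^{-1}-X'^{-1})(X^{-1}+X'^{-1})^{-1}(X^{-1}-X'^{-1})$, which is nonzero and PSD but may be singular. So your first route (lowering $t$) fails even when $Y\neq Y'$. Your KKT route only needs the slack at the midpoint minimizer to be nonzero, and that contradicts the tightness forced by $P,Q\succ0$. You should also say why multipliers exist: Slater's condition holds, e.g.\ at $X=Y=I$ with $t$ large.

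The genuine gap is smoothness. You leave the invertibility of the Jacobian open, and the mechanism you name does not suffice. Suppose $(h,U,V,P',Q')$ lies in the kernel. The linearized KKT equations read
\[
\mathcal L(U,V)=-h(I,I),\qquad \mathcal L(P',Q')=\mathcal H(U,V),\qquad \Tr(P'+Q')=0 .
\]
Here $\mathcal L(U,V)=(X^{-1}UX^{-1}-\eta(V),\,Y^{-1}VY^{-1}-\eta(U))$, and $\mathcal H$ is the Hessian of the Lagrangian in $(X,Y)$. The strictly convex piece is $\mathcal H$. The map $Z\mapsto X^{-1}ZX^{-1}$ you cite is the first derivative of inversion; it sits inside $\mathcal L$, where it is offset by $-\eta$. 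Pairing the middle equation with $(U,V)$ and using self-adjointness of $\mathcal L$ gives $\langle (U,V),\mathcal H(U,V)\rangle=0$. So strict convexity kills $U$, $V$, and then $h$.

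You are still left with $\mathcal L(P',Q')=0$, and you need $\mathcal L$ to be injective. This is a separate, LICQ-type nondegeneracy and does not follow from convexity. For $d=1$, $\eta$ is multiplication by a scalar $e$, and $\mathcal L$ is singular exactly when $eXY=1$.

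The missing idea is that the stationarity equations rule this out. Set $\mathcal T(U,V)=(X\eta(V)X,\,Y\eta(U)Y)$. Then $\mathcal T(P,Q)=(P,Q)-\varepsilon(X^2,Y^2)$. Using $X,Y\succeq(2K)^{-1}I$ at the minimizer and $P,Q\preceq I$, this gives $\mathcal T(P,Q)\preceq(1-\varepsilon/4K^2)(P,Q)$. Since $P,Q\succ0$, the positive map $\mathcal T$ is a contraction in the semidefinite order, and
\[
\mathcal L^{-1}(F,G)=\sum_{j\ge0}\mathcal T^j(XFX,\,YGY).
\]
This also yields uniqueness of the multipliers, which you need so that the KKT solution is an isolated zero to which the implicit function theorem applies.

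Finally, you must check that the implicit-function branch is actually the minimizer at nearby $x$. It is feasible because both constraints hold with equality. Moreover $P(x),Q(x)\succ0$ by continuity, so the convex Lagrangian has vanishing gradient there, which certifies optimality.
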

In particular, reaching a vertex without increasing $\Psi$ gives the required discrepancy bound. 

The second key lemma supplies progress at every nonterminal state $x \in [-1,1]^N$.

\begin{restatable}[Decreasing the potential]{lemma}{PotentialDecrease} \label{lem:potential_decrease}
There is a $C_0 \leq \poly(N, d)$ independent of the non-terminal state $x$ such that the following holds. Let $n \ge 1$ be the number coordinates that are active at state $x$. Then either assigning an active coordinate $i$ to an endpoint decreases $\Psi(x)$ by at least $\lambda\psi_i(x)$, or there are $\mu \in \mathbb{R}^n$ and $\Sigma \succeq 0$ satisfying
\[
\|\mu\|_\infty \le C_0,
\qquad
\operatorname{Tr}\Sigma \le C_0n,
\]
such that a random step with drift $\mu$ and covariance $\Sigma$ decreases the potential as 
\[
\nabla\Psi(x)\cdot\mu
+ \frac{1}{2}\operatorname{Tr}\bigl(\nabla^2\Psi(x)\Sigma\bigr)
\le -\frac{\lambda n}{C_0}.
\]
\end{restatable}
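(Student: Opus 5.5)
The plan is to fix a non-terminal state $x$ and let $(t,X,Y)$ be the unique minimizer of \eqref{eq:matrix_potential} given by \Cref{lem:matrix_potential}. We then split into a heavy case and a light case.

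\textbf{Heavy case.} Call an active coordinate $i$ \emph{heavy} if the reservoir can pay for the jump, in the sense that
\[
c\psi_i\Tr(A_iY)\ \ge\ 1-x_i \quad\text{or}\quad c\psi_i\Tr(A_iX)\ \ge\ 1+x_i .
\]
Suppose the first inequality holds, and move $x_i$ to $+1$. Then $S$ gains $(1-x_i)A_i$. By \eqref{eq:rank-one}, $\eta_x(Y)$ loses $c\psi_iA_iYA_i=c\psi_i\Tr(A_iY)A_i\succeq(1-x_i)A_i$, so the first constraint remains feasible. The second constraint sees $-S$ decrease and $\eta_x(X)$ lose a PSD term, so it also remains feasible. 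Hence $R$ does not increase, while $\lambda\Phi$ drops by exactly $\lambda\psi_i$. The case of the second inequality is symmetric, with $x_i$ moved to $-1$. This gives the first alternative of the lemma.

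\textbf{Light case.} Now every active $i$ is light, i.e.
\[
c\psi_i\Tr(A_iY)<1-x_i, \qquad c\psi_i\Tr(A_iX)<1+x_i .
\]
We want to compute $\nabla\Psi$ and $\nabla^2\Psi$ through an envelope argument. First, write the KKT conditions with dual PSD multipliers $P,Q$ for the two constraints. By the envelope theorem, $\partial_iR=\langle P-Q,A_i\rangle+c\psi_i'(\ldots)$, where the bracket is built from $\Tr(A_iXQ A_i)$-type terms.

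Rather than differentiate implicitly twice, we use an upper-bound strategy. For a random step $\Delta x$, we construct an explicit feasible perturbation $(t',X',Y')$ of the SDP, which gives $R(x+\Delta x)\le t'+\varepsilon\Tr(X'+Y')$. The perturbation keeps the constraint slack: we update
\[
X'^{-1}=X^{-1}-\Delta S-\Delta\eta(Y)+(\text{correction}),
\]
and symmetrically for $Y'$. By rank-one identities, the first-order responses $\Tr(A_iX')$ and $\Tr(A_iY')$ become a coupled linear system in the active coordinates. Matrix convexity of inversion gives the second-order cost, namely
\[
\Tr\big(X(\Delta S)X(\Delta S)X\big)\text{-type terms},
\]
bounded by $\sum_{i,j}\Delta x_i\Delta x_j\,|v_i^*Xv_j|^2$-like quantities. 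We bound these using the Cauchy--Schwarz inequality in \eqref{eq:rank-one}.

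The gain comes from the concavity of $\psi_i=(1-x_i^2)^{1/3}$. We have $\psi_i''\lesssim-\psi_i/(1-x_i^2)^{2}$ near the boundary, and $\psi_i''\le-\tfrac23$ in the middle. So the budget part contributes roughly $-\tfrac{\lambda}{2}\sum_i|\psi_i''|\Sigma_{ii}$. The reservoir terms $c\psi_i A_iYA_i$ also shrink in second order, which helps.

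\textbf{Choice of covariance.} Apply \Cref{lem:covariance} with $m=n$, $\kappa$ a constant, and $L$ the subspace of directions orthogonal to a small number of "bad" linear functionals. These are chosen so that first-order changes in $\Tr(X)$, $\Tr(Y)$ and $t$ cancel, plus the top eigen-directions of the matrix-inversion quadratic form weighted by $X,Y$. This gives $U$ with $\Tr U\ge n/2$ and $U\preceq\kappa\diag(U)$. Set
\[
\Sigma=D^{1/2}UD^{1/2}, \qquad D=\diag\big((1-x_i^2)\big)\ \text{(suitably scaled)},
\]
so that $U\preceq\kappa\diag U$ reduces the inversion cost to a diagonal sum. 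That sum is controlled, coordinate by coordinate, by lightness times $\Tr(A_iX)\Tr(A_iY)$-type quantities. Those are dominated by the concavity gain $\lambda|\psi_i''|(1-x_i^2)$ once $c=40$ is large enough. The drift $\mu$ is chosen to cancel residual first-order terms coming from the projection. The polynomial size of $C_0$ follows from $\varepsilon=\nu/d$ and $\nu\ge1/(Nd^2)$, which bound $\|X\|$ and $\|Y\|$ polynomially (because $\varepsilon\Tr X\le R$), and from $\lambda=\sqrt\nu/(100N)$.

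\textbf{Main obstacle.} The hardest step is this light-case second-order comparison. We must show that, after restriction to $L$, the curvature of $X^{-1}$ and $Y^{-1}$ (the cost term) is at most a constant fraction of the budget concavity gain. This is also where $c$ and the exponent $1/3$ must be tuned. Our first attempt is to express everything in the basis $w_i=X^{1/2}v_i$ and $\tilde w_i=Y^{1/2}v_i$, then use lightness to bound $\sum_i\psi_i\|w_i\|^2\|\tilde w_i\|^2$ against $\sum_i(1-x_i^2)$.
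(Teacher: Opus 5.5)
Your heavy case is correct and coincides with the paper's endpoint argument. The gap is in the light case. There you plan to pay for the curvature of matrix inversion with the concavity of the budget term $\lambda\Phi$, and that cannot work.

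Per coordinate, the inversion cost in the second variation of $R$ is governed by $a_ip_i$ and $b_iq_i$. Here $a_i=v_i^*Xv_i$ and $b_i=v_i^*Yv_i$, while $p_i=v_i^*Pv_i$ and $q_i=v_i^*Qv_i$ involve the dual multipliers of \eqref{eq:matrix_potential}, normalized by $\Tr(P+Q)=1$. The gain you use, $\lambda|\psi_i''|$, carries no dual weights at all. Lightness bounds $a_i,b_i$ by $O(1/(c\psi_i))$, but nothing in your argument forces $a_ip_i+b_iq_i$ down to the scale of $\lambda=\sqrt\nu/(100N)$; about $p_i$ one only knows $\sum_ip_i\le\|\sum_iA_i\|$. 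So no constant choice of $c$ gives the coordinatewise domination you assert. The polynomial slack in $C_0$ does not help either, since both the cost and the $\lambda\Phi$-gain are linear in $\Sigma$ and $\nabla\Psi$ may vanish.

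In the paper the cost is paid inside $R$ itself. Differentiate twice along the feasible path with $t$ frozen, and pair with $(P,Q)$ through $\langle(P,Q),\mathcal L(U,V)\rangle=\eps\Tr(U+V)$. This gives $h^\top\nabla^2Rh\le 3E_h-\frac c2\sum_i(-\psi_i'')h_i^2(b_ip_i+a_iq_i)$. So the second-order shrinkage of the reservoirs, which you mention only as something that ``helps'', is exactly what pays for inversion; it carries both the factor $c$ and the dual weights. Getting this bound requires absorbing the mixed term $2c\sum_i\psi_i'h_i(p_iv_i^*\dot Yv_i+q_iv_i^*\dot Xv_i)$. That uses the stationarity identity $P=X\eta_x(Q)X+\eps X^2$ to bound $E_h$ from below, together with $\psi'^2/(\psi(-\psi''))=2x^2/(3+x^2)\le\frac12$. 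This is where the exponent $1/3$ matters. The term $\lambda\Phi$ only turns $\nabla R\cdot\mu+\frac12\Tr(\nabla^2R\,\Sigma)\le0$ into the uniform $-\Omega(\lambda n/C_1)$.

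Two further ingredients are missing.

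First, the reservoir gain has crossed weights $b_ip_i+a_iq_i$, while the cost has same-side weights $a_ip_i+b_iq_i$. These are not comparable coordinate by coordinate. The paper rescales by $\alpha_i=T_i\ell_i$, $\beta_i=T_im_i$, $\ell_im_i=\psi_i$, and sets $\widetilde a_i=\ell_ia_i$, $\widetilde b_i=m_ib_i$, $z_i=\ell_ip_i$, $w_i=m_iq_i$ (not your $w_i$). The discrepancy is then $(\widetilde a_i-\widetilde b_i)(z_i-w_i)$. Since $\partial_iR=T_i(z_i-w_i)$, it is cancelled by the drift $\mu_i=-\frac{C_2}{T_i}(\widetilde a_i-\widetilde b_i)\omega_i$, and one checks that this drift's first-order effect on $\Phi$ is only an $O(1/c)$ fraction of $\Phi$'s curvature. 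Your drift, aimed at residual first-order terms from the projection, plays no such role.

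Second, $E_h$ is a coupled quadratic form in $h$, because the response is $\mathcal L^{-1}$ applied to the forcing. Applying \Cref{lem:covariance} directly to $h\in\R^n$ therefore only reduces $\E E_h$ to the energies of single-coordinate responses, which you have not bounded. The paper instead writes $\dot X=XC_FX$ and $\dot Y=YC_GY$ and applies the covariance lemma in $(F,G)\in\R^{2n}$. It restricts to the subspace where the interaction matrices $A_{ij}=c\ell_i\ell_j|v_i^*Xv_j|^2$ and $B_{ij}$ are $\frac12$-contractions in the $D_F,D_G$-weighted norms, together with $(I-A)F+(I-B)G=0$. Your quantity $\sum_i\psi_ia_ib_i=\sum_i\widetilde a_i\widetilde b_i$ does enter there, combined with the dual identity $Aw\le z$. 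Its role, though, is to show that only $O(n/c)$ directions need to be removed, not to be beaten by $\sum_i(1-x_i^2)$.
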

The endpoint alternative in \Cref{lem:potential_decrease} holds when removing one coordinate's contribution to $\eta_x$ in either constraint of \eqref{eq:matrix_potential} compensates for its entire change to $S(x)$. 
If this is not the case for any active coordinate $i$, then we consider the random step alternative. 

Although the second alternative in \Cref{lem:potential_decrease} is stated as a random step, it yields a deterministic descent step as follows: a sufficiently large gradient gives a coordinate direction of descent; otherwise, the displayed inequality forces a negative eigenvalue of the Hessian. Our algorithm  will never compute the $\mu$ and $\Sigma$ in \Cref{lem:potential_decrease}: they only serve as a certificate that a potential-decreasing step exists. We formalize this intuition in the proof of \Cref{thm:signing} below.

\subsection{Proof of \texorpdfstring{\Cref{thm:signing}}{signing}}
\label{subsec:proof_main_thm}

Now we can prove the main result.
\begin{proof}[Proof of \Cref{{thm:signing}}]
We use \Cref{lem:potential_decrease} to round $x^0$ to a vertex while decreasing $\Psi$ at each iteration. To achieve this, we first round coordinates that are sufficiently close to an endpoint.

Set $\sigma := \lambda^2/\nu$. Suppose an active coordinate $x_i$ has distance $z \le \sigma$ from its nearer endpoint. Moving it to that endpoint removes its contribution to $\eta_x$ and hence decreases it. Since $\|A_i\| \le \sqrt{\nu}$, keeping $X,Y$ and increasing $t$ by $z\sqrt{\nu}$ preserves both constraints in \eqref{eq:matrix_potential}. Hence
\[
\Delta\Psi(x)
\le z\sqrt{\nu} - \lambda (z(2-z))^{1/3}
\le z\sqrt{\nu} - \lambda\sqrt{z}
\le 0,
\]
where the second inequality uses $z^{1/3} \geq z^{1/2}$ for every $z \in [0,1]$, and the last inequality follows from $\sqrt{z\nu} \le \sqrt{\sigma\nu} = \lambda$. We perform this ``rounding to nearer endpoint'' procedure at the start of every iteration, freezing each coordinate that reaches an endpoint. Every remaining active coordinate then has distance greater than $\sigma$ from both endpoints.

\smallskip
\noindent \textbf{The Three Types of Movements.}
Let $M \leq \poly(N)$ be an upper bound on the operator norm of the second and third derivatives of $\Psi(x)$ whenever all active coordinates have distance at least $\sigma$ from the endpoints. 
Such a bound exists by \Cref{lem:matrix_potential} and compactness on each of the finitely many faces. 
Set step size parameters (which are both at least $1/\poly(N)$) 
\[
a := \lambda/(8C_0^2),
\qquad
s := \min\big\{\sigma,a / M\big\}.
\]
At each iteration where the current state $x \in [-1,1]^N$ has $n \ge 1$ active coordinates, consider the following three types of movements: 
\begin{enumerate}
    \item [(1)] {\it Endpoint movement:} The $2n$ single-coordinate endpoint movements, where an active coordinate $x_i$ is moved to either $\pm 1$.
    \item [(2)] {\it Hessian descent movement:} The two possible movements $x \pm sv$, where $v$ is a unit eigenvector corresponding to the smallest eigenvalue of the Hessian $\nabla^2 \Psi(x)$. 
    \item [(3)] {\it Gradient coordinate movement:} For an active coordinate $i$ with the largest gradient entry magnitude, i.e., $i \in \arg\max_j |\partial_j\Psi(x)|$, the coordinate step $x - s\,\mathsf{sign}\bigl(\partial_i\Psi(x)\bigr)\,\mathbf{e}_i$. 
\end{enumerate}
The algorithm tries all three types of movements, and eventually takes the step that achieves the smallest potential at the current iteration.

\smallskip
\noindent \textbf{Potential Decrease.} 
We claim that whenever the current state $x \in [-1,1]^N$ is not a terminal step, i.e., some active coordinate $x_i$ is at least $\sigma$-far from an endpoint, one of the three movements above decreases $\Psi$ by at least $as^2 \geq 1/\poly(N)$.  

If the first alternative of \Cref{lem:potential_decrease} holds, an endpoint movement decreases the potential by 
\[
\lambda\psi_i(x) \ge \lambda\sigma \ge as^2.
\]
If $\|\nabla\Psi(x)\|_\infty \ge a$, the gradient coordinate movement  gives
\[
\Delta\Psi(x)
\le -as + \frac{1}{2}Ms^2
\le -\frac{1}{2}as
\le -as^2.
\]
It remains to consider a state $x \in [-1,1]^N$ with $\|\nabla\Psi(x)\|_\infty < a$ at which the second alternative of \Cref{lem:potential_decrease} holds. 
Note that $|\nabla \Psi(x) \cdot \mu| \leq a C_0 n$, hence the guarantee of \Cref{lem:potential_decrease} implies 
\[
\frac{1}{2}\operatorname{Tr}(\nabla^2\Psi\,\Sigma)
\le -\frac{\lambda n}{C_0} + aC_0n
\le -\frac{\lambda n}{2C_0}.
\]
Since $0 < \operatorname{Tr}\Sigma \le C_0n$, this further implies that for some unit vector $v$, 
\[
v^\top \nabla^2 \Psi(x) v = \lambda_{\min}(\nabla^2\Psi(x))
\le -\frac{\lambda}{C_0^2}
= -8a.
\]
Averaging the Taylor expansions along $v$ and $-v$ cancels the linear term:
\[
\begin{aligned}
\frac{\Psi(x+sv)+\Psi(x-sv)}{2}
&\le \Psi(x) + \frac{s^2}{2}v^\top\nabla^2\Psi\,v
    + \frac{1}{6}Ms^3 \\
&\le \Psi(x) - 4as^2 + \frac{1}{6}Ms^3 
\le \Psi(x) - 3as^2.
\end{aligned}
\]
Thus one of the two candidates $x \pm v$ has the required decrease.

It follows that at any non-terminal state $x \in [-1,1]^N$, one of the three types of movements described above decreases the potential $\Psi(x)$ by at least $1/\poly(N)$, hence the algorithm terminates in at most $\poly(N)$ iterations. Moreover, each iteration of the algorithm can be implemented in polynomial time using semidefinite programming (for evaluating the matrix potential $R(x)$) and eigenvalue computation (for generating the movements). 
Finally, the algorithm reaches a vertex $x^f \in \{\pm 1\}^N$ which, by \Cref{lem:matrix_potential} and the fact that $\Phi(x^f) = 0$, satisfies the bound
\[
\Big\|\sum_i (x_i^f-x_i^0)A_i\Big\|
\le R(x^f)
= \Psi(x^f)
\le \Psi(x^0)
\le \left(2\sqrt{42}+\frac{1}{100}\right)\sqrt{\nu} \leq 13 \sqrt{\nu }.
\]   
This completes the proof of the \Cref{thm:signing}.
\end{proof}

\section{Proof of the Key Lemmas}
\label{sec:proof_key_lemmas}

In this section, we prove the two key lemmas in \Cref{sec:proof_main}, completing the proof. 

\subsection{Proof of \texorpdfstring{\Cref{lem:matrix_potential}}{Lemma 3.1}}

We start with the proof of \Cref{lem:matrix_potential}, which is restated below for convenience. 

\MatrixPotential*

\begin{proof}[Proof of \Cref{lem:matrix_potential}]
Fix $x$ and suppress its dependence in the notation, writing $S=S(x)$ and $\eta=\eta_x$. We use three facts about $\eta$ throughout: it is a positive linear map, i.e., $\eta(Z)\succeq 0$ for every $Z \succeq 0$, hence monotone, i.e., $Z\preceq Z'$ implies $\eta(Z)\preceq\eta(Z')$; it is self-adjoint, $\langle U,\eta(V)\rangle=\langle \eta(U),V\rangle$, by cyclicity of the trace; and $\eta(I)=c\sum_i\psi_iA_i^2\preceq c\nu I$, since $0\le\psi_i\le 1$. We also use that $\|S\|\le 2$, because $|x_i-x_i^0|\le 2$ and $\sum_iA_i\preceq\operatorname{Tr}(\sum_iA_i)\,I=I$. Finally, for $X\succ 0$ and Hermitian $U$, expanding $(X+sU)^{-1}=X^{-1}(I+sUX^{-1})^{-1}$ as a geometric series gives
\begin{equation}\label{eq:inverse-expansion}
(X+sU)^{-1}=X^{-1}-s\,X^{-1}UX^{-1}+s^2\,X^{-1}UX^{-1}UX^{-1}+O(s^3)\qquad(s\to 0),
\end{equation}
which supplies the first two derivatives of matrix inversion.

\smallskip
\noindent\textbf{Bounds on $R$ and $\Psi(x^0)$.}
If $(t,X,Y)$ is feasible, then $X^{-1}\succ 0$ and $\eta(Y)\succeq 0$, so the first constraint gives $S\prec tI$, and the second gives $-S\prec tI$. Hence $t>\|S\|$, and since the objective is at least $t$, we get $R\ge\|S\|$. Conversely, for every $a>0$ the point $X=Y=aI$, $t=\|S\|+a^{-1}+c\nu a$ is feasible, because $a^{-1}I+S+a\,\eta(I)\preceq(a^{-1}+\|S\|+c\nu a)I$ due to the bound $\eta(I) \preceq c \nu I$ above, and likewise for the second constraint. Its objective value is
\[
\|S\|+a^{-1}+(c\nu+2d\varepsilon)a,
\]
and the choice $a=(c\nu+2d\varepsilon)^{-1/2}$ gives $R\le\|S\|+2\sqrt{c\nu+2d\varepsilon}=\|S\|+2\sqrt{(c+2)\nu}$, as $d\varepsilon=\nu$. Finally, $S(x^0)=0$, $\Phi(x^0)\le N$, $c=40$ and $\lambda N=\sqrt{\nu}/100$ give the bound on $\Psi(x^0)$.

\smallskip
\noindent\textbf{Attainment.}
Let $K:=100$, and let $\mathcal C$ be the set of feasible $(t,X,Y)$ with objective value at most $K$. It is nonempty, since $R\le\|S\|+2\sqrt{42\nu}\le 2+2\sqrt{42}<K$. For $(t,X,Y)\in\mathcal C$ we have $t\le K$, because the trace penalty is positive; the first constraint gives $X^{-1}\preceq tI-S-\eta(Y)\preceq(t+\|S\|)I\preceq 2KI$; and $\varepsilon\operatorname{Tr}(X)\le K$. Together with the symmetric bounds for $Y$,
\begin{equation}\label{eq:box}
0<t\le K,
\qquad
(2K)^{-1}I\preceq X,Y\preceq(K/\varepsilon)I
\qquad\text{for every }(t,X,Y)\in\mathcal C.
\end{equation}
Thus $\mathcal C$ is contained in a compact box, and it is closed in this box: the map $X\mapsto X^{-1}$ is continuous on $\{X\succeq(2K)^{-1}I\}$, so the constraints and the sublevel condition are closed conditions there. Hence $\mathcal C$ is compact, and the continuous objective attains its minimum on $\mathcal C$. Feasible points outside $\mathcal C$ have objective value larger than $K$, so this is the minimum of the whole program. The minimum is therefore attained, and every minimizer satisfies \eqref{eq:box}.

\smallskip
\noindent\textbf{Optimality Conditions.}
The program \eqref{eq:matrix_potential} defining $R$ is convex: its objective is linear and, since matrix inversion is operator convex on positive definite matrices, each of its constraints is a convex matrix inequality in $(t,X,Y)$. It is also strictly feasible, for instance at $X=Y=I$ with $t$ sufficiently large. For $P,Q\succeq 0$, define the Lagrangian
\[
\Lambda(t,X,Y;P,Q)
=t+\varepsilon\operatorname{Tr}(X+Y)
+\big\langle P,\;X^{-1}+S+\eta(Y)-tI\big\rangle
+\big\langle Q,\;Y^{-1}-S+\eta(X)-tI\big\rangle .
\]
By the Lagrange multiplier theorem for convex programs with matrix inequality constraints (strong duality under Slater's condition; see \cite[\S 5.9]{BV04book}),\footnote{Via the Schur-complement equivalence in \Cref{sec:prelims}, the program can be written as a semidefinite program in $(t,X,Y)$, which is how the algorithm evaluates $R(x)$ efficiently. For the analysis it is more convenient to apply Lagrangian duality to the constraints as written, so that the multipliers are $d\times d$ matrices.} there exist $P,Q\succeq 0$ such that every minimizer $(t,X,Y)$ of the program \eqref{eq:matrix_potential} 
\begin{enumerate}
\item[(a)] minimizes $\Lambda(\cdot\,;P,Q)$ over all $t\in\mathbb R$ and $X,Y\succ 0$, and
\item[(b)] satisfies complementary slackness:
$\langle P,\,tI-X^{-1}-S-\eta(Y)\rangle=0=\langle Q,\,tI-Y^{-1}+S-\eta(X)\rangle$.
\end{enumerate}
Fix such multipliers and a minimizer $(t,X,Y)$. Since $\Lambda$ is affine in $t$ with slope $1-\operatorname{Tr}(P+Q)$, and $t$ ranges over all of $\mathbb R$, (a) forces this slope to vanish. Next, $\Lambda$ is differentiable on the open set $\{X,Y\succ 0\}$, and by \eqref{eq:inverse-expansion} and the self-adjointness of $\eta$, its gradient with respect to $X$ is $\varepsilon I-X^{-1}PX^{-1}+\eta(Q)$, which vanishes by (a). Conjugating by $X$, and arguing symmetrically for $Y$, we obtain the stationarity equations 
\begin{equation}\label{eq:kkt-stationarity}
\operatorname{Tr}(P+Q)=1,
\qquad
P=\varepsilon X^2+X\eta(Q)X,
\qquad
Q=\varepsilon Y^2+Y\eta(P)Y .
\end{equation}
Since $\eta(Q)\succeq 0$, the second equation gives $P\succeq\varepsilon X^2\succ 0$, and likewise $Q\succ 0$; also $P,Q\preceq I$, because $\operatorname{Tr}(P+Q)=1$. As $P\succ 0$, complementary slackness (b) forces the first constraint to be tight: if $Z\succeq 0$ and $\langle P,Z\rangle=0$, then $\lambda_{\min}(P)\operatorname{Tr}(Z)\le\langle P,Z\rangle=0$, so $Z=0$. The same holds for the second constraint by symmetry, so we have 
\begin{equation}\label{eq:kkt-tight}
X^{-1}+S+\eta(Y)=tI,
\qquad
Y^{-1}-S+\eta(X)=tI .
\end{equation}

\smallskip
\noindent\textbf{Uniqueness of the Minimizer.}
The objective of \eqref{eq:matrix_potential} is linear, so uniqueness has to come from the constraints; the multipliers transfer the strict convexity of matrix inversion to the Lagrangian. Indeed, by \eqref{eq:inverse-expansion}, the second derivative of $X\mapsto\operatorname{Tr}(PX^{-1})$ along a Hermitian direction $U$ is
\[
2\operatorname{Tr}(PX^{-1}UX^{-1}UX^{-1})
=2\operatorname{Tr}\big(\tilde P\,W^2\big),
\qquad
\tilde P:=X^{-1/2}PX^{-1/2}\succ 0,
\quad
W:=X^{-1/2}UX^{-1/2},
\]
which is positive whenever $U\neq 0$, since then $W^2\succeq 0$ is nonzero. So $X\mapsto\operatorname{Tr}(PX^{-1})$ is strictly convex on $\{X\succ 0\}$, and similarly for $Y\mapsto\operatorname{Tr}(QY^{-1})$; all other terms of $\Lambda(\cdot\,;P,Q)$ are affine in $(X,Y)$. Hence $\Lambda(\cdot\,;P,Q)$ is strictly convex in $(X,Y)$ and has at most one minimizer over $\{X,Y\succ 0\}$. By (a), the $(X,Y)$-part of every minimizer of \eqref{eq:matrix_potential} is this minimizer, and then $t$ is determined by \eqref{eq:kkt-tight}. This proves uniqueness of the minimizer.

For later use, we record the Hessian of $\Lambda(\cdot\,;P,Q)$ with respect to $(X,Y)$. Here and below, pairs of Hermitian matrices carry the inner product $\langle (U,V),(U',V')\rangle=\langle U,U'\rangle+\langle V,V'\rangle$, the induced (Frobenius) norm, and the componentwise semidefinite order. By \eqref{eq:inverse-expansion}, the Hessian is the block-diagonal map
\[
\mathcal H(U,V)
=\big(X^{-1}UX^{-1}PX^{-1}+X^{-1}PX^{-1}UX^{-1},\;\;
Y^{-1}VY^{-1}QY^{-1}+Y^{-1}QY^{-1}VY^{-1}\big),
\]
and the computation above shows that
\begin{equation}\label{eq:hessian-pd}
\begin{aligned}
\langle (U,V),\mathcal H(U,V)\rangle
&=2\operatorname{Tr}(PX^{-1}UX^{-1}UX^{-1})+2\operatorname{Tr}(QY^{-1}VY^{-1}VY^{-1})\\
&\ge\frac{2\varepsilon^2}{K}\big(\|U\|_{\mathrm F}^2+\|V\|_{\mathrm F}^2\big),
\end{aligned}
\end{equation}
with strict positivity whenever $(U,V)\neq 0$. For the quantitative bound, $P\succeq\varepsilon X^2$ gives
\[
\operatorname{Tr}(PX^{-1}UX^{-1}UX^{-1})
\ge\varepsilon\operatorname{Tr}(UX^{-1}U)
\ge\frac{\varepsilon^2}{K}\|U\|_{\mathrm F}^2 ,
\]
where the second inequality uses $X^{-1}\succeq(\varepsilon/K)I$ from \eqref{eq:box}; the same holds for $Y$, $Q$ and $V$.

\smallskip
\noindent\textbf{The Linearized Constraints.}
Define the linear maps on pairs of Hermitian matrices
\[
\mathcal L(U,V)
=\big(X^{-1}UX^{-1}-\eta(V),\;Y^{-1}VY^{-1}-\eta(U)\big),
\qquad
\mathcal T(U,V)
=\big(X\eta(V)X,\;Y\eta(U)Y\big).
\]
By \eqref{eq:inverse-expansion}, $-\mathcal L$ is the derivative of the left-hand sides of \eqref{eq:kkt-tight} with respect to $(X,Y)$. The map $\mathcal L$ is self-adjoint, by cyclicity of the trace and the self-adjointness of $\eta$. The map $\mathcal T$ is positive, hence monotone, because $\eta$ is. Multiplying the two components of the equation $\mathcal L(U,V)=(F,G)$ on both sides by $X$ and by $Y$, respectively, shows that
\begin{equation}\label{eq:L-vs-T}
\mathcal L(U,V)=(F,G)
\qquad\Longleftrightarrow\qquad
(\mathcal I-\mathcal T)(U,V)=(XFX,\,YGY),
\end{equation}
where $\mathcal I$ denotes the identity map. The key step in the proof of smoothness is the invertibility of $\mathcal L$, which we obtain by showing that $\mathcal T$ is a contraction in the semidefinite order.

\begin{claim}\label{claim:L-invertible}
$\mathcal L$ is invertible, with $\mathcal L^{-1}(F,G)=\sum_{j\ge 0}\mathcal T^j(XFX,YGY)$; in particular, $\mathcal L^{-1}$ maps pairs of PSD matrices to pairs of PSD matrices. Moreover, $\|\mathcal L^{-1}\|\le\varepsilon^{-1}$ and $\|\mathcal L\|\le 4K^2+c\nu$ in the operator norm induced by the Frobenius norm.
\end{claim}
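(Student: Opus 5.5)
The plan is to show that the Neumann series $\sum_{j\ge0}\mathcal T^j$ converges, using the stationarity equations \eqref{eq:kkt-stationarity} as a positive ``eigenvector'' certificate for $\mathcal T$. Then \eqref{eq:L-vs-T} turns this into an explicit inverse of $\mathcal L$.

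\textbf{Step 1: a supersolution.} Set $(P_0,Q_0):=(P,Q)$. The identities $P=\varepsilon X^2+X\eta(Q)X$ and $Q=\varepsilon Y^2+Y\eta(P)Y$ give
\[
(P,Q)=\varepsilon(X^2,Y^2)+\mathcal T(P,Q),
\]
so $\mathcal T(P,Q)=(P,Q)-\varepsilon(X^2,Y^2)$. By \eqref{eq:box} we have $X^2\succeq(2K)^{-2}I$, and $P\preceq I$. Hence $\varepsilon X^2\succeq\theta P$ with $\theta:=\varepsilon(2K)^{-2}\in(0,1)$, and likewise $\varepsilon Y^2\succeq\theta Q$. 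This yields
\[
\mathcal T(P,Q)\preceq(1-\theta)(P,Q)
\]
in the componentwise order.

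\textbf{Step 2: contraction on the order interval.} Take any pair $(F,G)$ with $-\alpha(P,Q)\preceq(F,G)\preceq\alpha(P,Q)$. Since $\mathcal T$ is positive, hence monotone, induction gives $\pm\mathcal T^j(F,G)\preceq\alpha(1-\theta)^j(P,Q)$. Because $P,Q\succ0$, every Hermitian pair lies in such an interval for some $\alpha$. Therefore $\sum_j\mathcal T^j$ converges absolutely on all pairs, and $\mathcal I-\mathcal T$ is invertible with inverse $\sum_j\mathcal T^j$. By \eqref{eq:L-vs-T}, $\mathcal L$ is invertible with the stated formula.

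\textbf{Step 3: consequences.} Positivity preservation is immediate: if $F,G\succeq0$, then $XFX,YGY\succeq0$, and every $\mathcal T^j$ is positive. The bound $\|\mathcal L\|\le 4K^2+c\nu$ is routine. The term $X^{-1}UX^{-1}$ has Frobenius norm at most $\|X^{-1}\|^2\|U\|_{\mathrm F}\le 4K^2\|U\|_{\mathrm F}$. For $\eta$, use $\|\eta(V)\|_{\mathrm F}\le c\sum_i\psi_i|\Tr(A_iV)|\,\|A_i\|_{\mathrm F}$ together with Cauchy--Schwarz and $\sum_iA_i^2\preceq\nu I$, or equivalently self-adjointness and positivity of $\eta$ with $\eta(I)\preceq c\nu I$, to get $\|\eta\|\le c\nu$. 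The off-diagonal block structure then gives the sum.

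\textbf{Step 4: the main obstacle, $\|\mathcal L^{-1}\|\le\varepsilon^{-1}$.} The Step 2 estimate gives only a $K$-dependent bound, so a different argument is needed. I would use the self-adjoint form of $\mathcal L$ and test against $(U,V)$:
\[
\langle (U,V),\mathcal L(U,V)\rangle=\Tr(UX^{-1}UX^{-1})+\Tr(VY^{-1}VY^{-1})-2\langle U,\eta(V)\rangle.
\]
The goal is to show this is at least $\varepsilon\|(U,V)\|_{\mathrm F}^2$, i.e.\ that $\mathcal L\succeq\varepsilon\mathcal I$ as a self-adjoint operator. The plan has two parts.
\begin{itemize}
\item First bound $2|\langle U,\eta(V)\rangle|$ by a weighted Cauchy--Schwarz in which the weights come from the supersolution. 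For positive maps, $\langle U,\eta(V)\rangle^2\le\langle U,\eta(U')\rangle\cdots$, with the Perron-type vector $(P,Q)$ used to split the cross term as in the standard proof that a positive operator with $\mathcal T(P,Q)\preceq(P,Q)-\varepsilon(X^2,Y^2)$ has spectral radius below $1$.
\item Then, after the change of variables $U=X\tilde UX$, compare the result with the quadratic form $\langle \tilde U,X\tilde UX\rangle$. The slack $\varepsilon X^2$ in the stationarity equation should supply exactly an $\varepsilon\|U\|_{\mathrm F}^2$ margin.
\end{itemize}
If this symmetric route fails, I would fall back on the interpretation of $\mathcal L$ as the Hessian of the reduced dual, using strong convexity analogous to \eqref{eq:hessian-pd}, and accept a weaker polynomial bound. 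Any $1/\poly$ bound suffices for the later smoothness argument.
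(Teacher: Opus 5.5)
Your Steps 1--3 coincide with the paper's proof. The paper handles the supersolution bound $\mathcal T(P,Q)\preceq(1-\varepsilon/4K^2)(P,Q)$, the monotone iteration giving convergence of $\sum_j\mathcal T^j$, positivity of $\mathcal L^{-1}$, and $\|\mathcal L\|\le 4K^2+c\nu$ the same way.

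The gap is Step 4. The claim asserts $\|\mathcal L^{-1}\|\le\varepsilon^{-1}$. You give only an unfinished plan and a fallback that explicitly settles for a weaker polynomial bound, so this part of the statement is not proved. Moreover, the natural implementation of your weighted Cauchy--Schwarz does not close in the noncommutative setting. Set $u_i=v_i^*Uv_i$ and $w_i=v_i^*Vv_i$. Split $2|u_iw_i|\le (q_i/p_i)u_i^2+(p_i/q_i)w_i^2$, bound $u_i^2\le p_i\,v_i^*UP^{-1}Uv_i$, and substitute $\eta(Q)=X^{-1}PX^{-1}-\varepsilon I$. This leaves you comparing $\Tr(\tilde U^2)$ with $\Tr(\tilde UB^{-1}\tilde UB)$, where $\tilde U=X^{-1/2}UX^{-1/2}$ and $B=X^{-1/2}PX^{-1/2}$. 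But the latter always dominates the former, with equality only when $\tilde U$ and $B$ commute. The entrywise Perron-vector trick for symmetric Z-matrices has no direct analogue here.

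The missing idea is one you already used in Step 3 for $\eta$:
\begin{itemize}
\item The stationarity equations say exactly $\mathcal L(P,Q)=(\varepsilon I,\varepsilon I)$. Hence $\mathcal L^{-1}(I,I)=\varepsilon^{-1}(P,Q)\preceq\varepsilon^{-1}(I,I)$, because $P,Q\preceq I$.
\item Since $\mathcal L^{-1}$ is positive, hence monotone, it maps every pair $W$ with $-(I,I)\preceq W\preceq(I,I)$ into the order interval between $-\varepsilon^{-1}(I,I)$ and $\varepsilon^{-1}(I,I)$.
\item Since $\mathcal L^{-1}$ is self-adjoint, its Frobenius-induced norm is $|\theta|$ for some eigenvalue $\theta$. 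Normalize an eigenvector $W$ so that the larger of its two components has operator norm exactly $1$. Applying the sandwich to $\mathcal L^{-1}W=\theta W$ gives $|\theta|\le\varepsilon^{-1}$.
\end{itemize}
In short, use $(P,Q)$ as a certificate for $\mathcal L$ with right-hand side $\varepsilon(I,I)$, not for $\mathcal T$ with right-hand side $\varepsilon(X^2,Y^2)$. The latter is the source of the $K$-dependent loss you noticed. Then let positivity plus self-adjointness convert the operator-norm bound into a Frobenius-norm bound.
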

\begin{proof}
By \eqref{eq:kkt-stationarity}, $\mathcal T(P,Q)=(P,Q)-\varepsilon(X^2,Y^2)$. By \eqref{eq:box} and $P\preceq I$, we have $\varepsilon X^2\succeq\frac{\varepsilon}{4K^2}I\succeq\frac{\varepsilon}{4K^2}P$, and likewise for $Y$ and $Q$. Hence
\[
\mathcal T(P,Q)\preceq r\,(P,Q),
\qquad
r:=1-\frac{\varepsilon}{4K^2}\in(0,1).
\]
Since $P,Q\succ 0$, every Hermitian pair $W$ satisfies $-a(P,Q)\preceq W\preceq a(P,Q)$ for some $a\ge 0$. Applying the monotone map $\mathcal T$ repeatedly gives $-ar^j(P,Q)\preceq\mathcal T^jW\preceq ar^j(P,Q)$ for all $j\ge 0$, so $\mathcal T^jW\to 0$ geometrically for every $W$. Therefore the series $\sum_{j\ge 0}\mathcal T^j$ converges, and the identity $(\mathcal I-\mathcal T)\sum_{j<m}\mathcal T^j=\mathcal I-\mathcal T^m\to\mathcal I$ shows that $\mathcal I-\mathcal T$ is invertible with inverse $\sum_{j\ge 0}\mathcal T^j$. By \eqref{eq:L-vs-T}, $\mathcal L$ is invertible with the stated inverse. Each term of the series maps PSD pairs to PSD pairs, so $\mathcal L^{-1}$ does too.

For the norm of $\mathcal L^{-1}$, note that \eqref{eq:kkt-stationarity} says exactly that $\mathcal L(P,Q)=(\varepsilon I,\varepsilon I)$, so $\mathcal L^{-1}(I,I)=\varepsilon^{-1}(P,Q)\preceq\varepsilon^{-1}(I,I)$. Since $\mathcal L^{-1}$ is positive, hence monotone, every pair $W$ with $-(I,I)\preceq W\preceq(I,I)$ satisfies $-\varepsilon^{-1}(I,I)\preceq\mathcal L^{-1}W\preceq\varepsilon^{-1}(I,I)$. Now $\mathcal L^{-1}$ is self-adjoint, so its operator norm is the largest absolute value of an eigenvalue $\theta$. Taking $W$ to be an eigenvector for $\theta$, normalized so that both components have operator norm at most $1$ and one of them has operator norm exactly $1$, the sandwich just derived for $\mathcal L^{-1}W=\theta W$ gives $|\theta|\le\varepsilon^{-1}$. The same argument, applied to the positive self-adjoint map $\eta$ and the bound $\eta(I)\preceq c\nu I$, gives $\|\eta\|\le c\nu$. Since $\|X^{-1}\|,\|Y^{-1}\|\le 2K$ by \eqref{eq:box}, the map $(U,V)\mapsto(X^{-1}UX^{-1},Y^{-1}VY^{-1})$ has norm at most $4K^2$, and thus $\|\mathcal L\|\le 4K^2+c\nu$.
\end{proof}

The norm bounds in Claim \ref{claim:L-invertible} are not needed for the present lemma, but they are useful for bounding the derivatives of $R$ quantitatively. The claim also shows that the multipliers are unique: if $(\hat P,\hat Q)$ is another pair of multipliers, then it satisfies \eqref{eq:kkt-stationarity} with the same $(X,Y)$, so $(P-\hat P,\,Q-\hat Q)$ is a fixed point of $\mathcal T$, and $\mathcal T^j\to 0$ forces $\hat P=P$ and $\hat Q=Q$.

\smallskip
\noindent\textbf{Smoothness.}
Write $z=(t,X,Y,P,Q)$, and let $F(x;z)$ be the map whose components are the differences of the two sides of the five equations in \eqref{eq:kkt-stationarity} and \eqref{eq:kkt-tight}, with the two matrix equations in \eqref{eq:kkt-stationarity} written as $X^{-1}PX^{-1}-\eta_x(Q)-\varepsilon I=0$ and $Y^{-1}QY^{-1}-\eta_x(P)-\varepsilon I=0$. Then $F$ takes values in $\mathbb R\times\operatorname{Herm}_d^4$, where $\operatorname{Herm}_d$ is the real vector space of $d\times d$ Hermitian matrices, of dimension $d^2$; this is a space of the same dimension $1+4d^2$ as the space of unknowns $z$. We have shown that for every $x$, the unique minimizer together with its unique multipliers is a solution of $F(x;z)=0$ with $X,Y,P,Q\succ 0$. Now restrict $x$ to an open face of the cube. On the face, the frozen coordinates are constant and the active coordinates lie in $(-1,1)$, where $\psi_i$ is smooth, so $S(x)$ and $\eta_x$ depend smoothly on $x$, and $F$ is smooth in $(x,z)$ on the open set where $X,Y\succ 0$.

Fix $x_0$ in the face and let $z_0=(t,X,Y,P,Q)$ be the corresponding solution. We claim that the Jacobian $D_zF(x_0;z_0)$ is nonsingular. Let $(h,U,V,P',Q')$ lie in its kernel. Differentiating the five equations in the direction $(h,U,V,P',Q')$, using \eqref{eq:inverse-expansion}, gives
\[
\mathcal L(U,V)=-h\,(I,I),
\qquad
\mathcal L(P',Q')=\mathcal H(U,V),
\qquad
\operatorname{Tr}(P'+Q')=0,
\]
where the first system is the derivative of \eqref{eq:kkt-tight} and the second is the derivative of the two matrix equations in \eqref{eq:kkt-stationarity}. Pairing the middle system with $(U,V)$ and using the self-adjointness of $\mathcal L$,
\[
\langle (U,V),\mathcal H(U,V)\rangle
=\langle \mathcal L(U,V),(P',Q')\rangle
=-h\operatorname{Tr}(P'+Q')
=0 .
\]
By \eqref{eq:hessian-pd}, this forces $U=V=0$. The first system then gives $h=0$, and the middle system becomes $\mathcal L(P',Q')=0$, so $P'=Q'=0$ by Claim \ref{claim:L-invertible}. The kernel is trivial, as claimed.

By the implicit function theorem, there are a neighborhood $\mathcal U$ of $x_0$ in the face and a smooth map $x\mapsto z(x)=(t(x),X(x),Y(x),P(x),Q(x))$ on $\mathcal U$ with $z(x_0)=z_0$ and $F(x;z(x))=0$. Shrinking $\mathcal U$, we may assume by continuity that $X(x),Y(x),P(x),Q(x)\succ 0$ on $\mathcal U$. We claim that $(t(x),X(x),Y(x))$ is the minimizer of \eqref{eq:matrix_potential} at $x$ for every $x\in\mathcal U$. It is feasible, since it satisfies \eqref{eq:kkt-tight} at $x$. Let $(t',X',Y')$ be any feasible point at $x$, and write $\Lambda(\cdot)$ for $\Lambda(\cdot\,;P(x),Q(x))$ at $x$. Then
\[
t'+\varepsilon\operatorname{Tr}(X'+Y')
\;\ge\;\Lambda(t',X',Y')
\;\ge\;\Lambda(t(x),X(x),Y(x))
\;=\;t(x)+\varepsilon\operatorname{Tr}(X(x)+Y(x)) .
\]
The first inequality holds because the constraint matrices at $(t',X',Y')$ are negative semidefinite and $P(x),Q(x)\succeq 0$. The second holds because $\Lambda$ is convex, as $P(x),Q(x)\succeq 0$, and its gradient vanishes at $(t(x),X(x),Y(x))$ by \eqref{eq:kkt-stationarity} at $x$, so this point is a global minimizer of $\Lambda$. The final equality holds because the constraints are tight at $(t(x),X(x),Y(x))$. Thus $(t(x),X(x),Y(x))$ is optimal, hence equal to the unique minimizer at $x$, and $(P(x),Q(x))$ are the unique multipliers at $x$. Consequently the minimizer, the multipliers, and $R(x)=t(x)+\varepsilon\operatorname{Tr}(X(x)+Y(x))$ are smooth on $\mathcal U$. Since $x_0$ was arbitrary, they are smooth on the whole open face.
\end{proof}
\subsection{Proof of \texorpdfstring{\Cref{lem:potential_decrease}}{Lemma 3.2}} \label{sec3.2}
Recall the large constant $c = 40$ in the definition of $\eta_x(Z) = c\sum_i \psi_i A_i Z A_i$. 
All unqualified implicit constants are absolute and independent of $c$. We first restate the lemma for completeness.

\PotentialDecrease*

Fix a state $x$, relabel its active coordinates as $[n]$, and suppress the dependence on $x$; all derivatives below are taken within the current open face of $[-1,1]^n$ on which $x$ lies. Let $(t, X, Y)$ minimize the matrix potential, and write
\[
A_i = v_i v_i^*, \qquad a_i =v_i^*Xv_i, \qquad b_i = v_i^*Y v_i.
\]
Suppose $c\psi_ib_i \geq 1-x_i$ for some active coordinate $i$. Moving $x_i$ to $+1$ adds $(1-x_i)A_i$ to $S$ and removes the $i$-th summand $c\psi_ib_i$ from $\eta_x$, and so the two constraint matrices change by $(1-x_i-c\psi_ib_i)A_i$ and $-(1-x_i+c\psi_ia_i)A_i$ respectively. Furthermore, note that both changes are negative semidefinite, so the same $(t, X, Y)$ remains feasible after the entire endpoint move. Thus $R$ does not increase, while $\Phi$ decreases by $\psi_i$, and hence the potential $\Psi$ decreases. Similarly, $c\psi_ia_i \geq 1+x_i$ permits the move to $-1$. We may therefore assume that for all $i \in [n]$,
\begin{equation}\label{eq6}
    c\psi_i a_i < 1 + x_i, \qquad c\psi_ib_i < 1-x_i.
\end{equation}
Under these inequalities, we will construct a covariance that controls the positive curvature of $R$, and then a drift that compensates for the difference between the weights in its positive and negatiuve curvature terms. The same drift and covariance will give a uniform decrease in $\Phi$.

Let $P, Q \succ 0$ be the optimal dual matrices, and set $p_i = v_i^*Pv_i$ and $q_i = v_i^*Qv_i$. Following from our proof of \Cref{lem:matrix_potential} we use the stationarity identities
\begin{equation}\label{eq7}
\operatorname{Tr}(P+Q) = 1, \qquad
P = X\eta_x(Q)X + \eps X^2, \qquad
Q = Y\eta_x(P)Y + \eps Y^2,
\end{equation}
together with the fact that both of the primal constraints are tight. Now recall the constraint linearization
\[
\mathcal{L}(U, V) = (X^{-1}UX^{-1} - \eta_x(V), Y^{-1}VY^{-1} - \eta_x(U)),
\]
which we know is invertible by the proof of \Cref{lem:matrix_potential}. For any Hermitian $U, V$, the self-adjointness of $\eta_x$ and \eqref{eq7} give
\begin{align}\label{eq8}
    \langle (P, Q), \mathcal{L}(U, V) \rangle &= \operatorname{Tr}((X^{-1}PX^{-1} - \eta_x(Q))U) + \operatorname{Tr}((Y^{-1}QY^{-1} - \eta_x(P))V) \notag \\
    &= \eps \operatorname{Tr}(U + V).
\end{align}
This identity captures the role of the dual matrices $P, Q$ in the derivative calculation; we will apply it to the first and second derivatives of a feasible matrix path.

Fix a direction $h \in \R^n$. Keep $t$ fixed and solve the tight constraints near the current state:
\begin{align*}
    X(s)^{-1} + S(x + sh) + \eta_{x+sh}(Y(s)) &= tI,\\
    Y(s)^{-1} - S(x + sh) + \eta_{x+sh}(X(s)) &= tI,
\end{align*}
with $X(0)=X$ and $Y(0)=Y$. The derivative of this system with respect to its matrix variables at $s = 0$ is $-\mathcal{L}$. The implicit-function theorem therefore gives a smooth local solution, and $X(s), Y(s)$ remain positive definite for sufficiently small $|s|$.
\begin{remark}
    This path is feasible but need not be optimal at the perturbed states; that is sufficient: its objective value will give an upper bound on $R(x +sh)$.
\end{remark}
Throughout the calculations below, keep $P, Q$ fixed at their values at $x$, and write $\Dot{X} = X'(0)$ and $\Dot{Y} = Y'(0)$. Differentiating the first constraint gives
\[
-X^{-1}\Dot{X}X^{-1} + \sum_{i \in [n]} h_i A_i + c \sum_{i \in [n]} \psi_i' h_i A_i Y A_i + \eta_x(\Dot{Y}) = 0.
\]
Since $A_iYA_i = b_i A_i$ and similarly $A_i X A_i = a_i A_i$, the two differentiated constraints become 
\begin{equation}\label{eq9}
\begin{aligned}
X^{-1}\dot{X}X^{-1}
&= \sum_i \alpha_i h_i A_i + \eta_x(\dot{Y}), \\
Y^{-1}\dot{Y}Y^{-1}
&= -\sum_i \beta_i h_i A_i + \eta_x(\dot{X}),
\end{aligned}
\qquad
\alpha_i = 1 + c\psi_i' b_i,
\quad
\beta_i = 1 - c\psi_i' a_i.
\end{equation}
In particular, $(\Dot{X}, \Dot{Y})$ depends linearly on $h$. Define
\[
E_h = \operatorname{Tr}(PX^{-1}\Dot{X}X^{-1}\Dot{X}X^{-1})
+ \operatorname{Tr}(QY^{-1}\Dot{Y}Y^{-1}\Dot{Y}Y^{-1}).
\]
Clearly $E_h$ is a nonnegative quadratic form in $h$.\footnote{
Each term of the sum is nonnegative; for example, 
\[
X^{-1}\Dot{X}X^{-1}\Dot{X}X^{-1} 
= X^{-1/2}(X^{-1/2}\Dot{X}X^{-1/2})^2X^{-1/2} \succeq 0.
\]
}
\begin{lemma}[First and second variations]\label{lem4.1}
    For every active coordinate $i$ and every direction $h \in \R^n$,
    \begin{equation}\label{eq10}
    \partial_i R = \alpha_i p_i - \beta_i q_i,
    \qquad
    h^\top \nabla^2 R h \leq O(E_h) - \Omega(c)\sum_{i\in [n]} (-\psi_i'')h_i^2 (b_i p_i + a_i q_i).
    \end{equation}
\end{lemma}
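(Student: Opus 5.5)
The plan is to compare $R$ along the feasible path $s\mapsto(t,X(s),Y(s))$ constructed above. Set $g(s):=t+\varepsilon\operatorname{Tr}(X(s)+Y(s))$. By the Remark, $R(x+sh)\le g(s)$ for small $|s|$, with equality at $s=0$. Both sides are smooth: $R$ by \Cref{lem:matrix_potential}, and $g$ by the implicit function theorem. Hence $g(s)-R(x+sh)\ge0$ has a minimum at $s=0$. This gives $h\cdot\nabla R=g'(0)$ and $h^\top\nabla^2Rh\le g''(0)$, so it suffices to compute $g'(0)$ and $g''(0)$.

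\textbf{First variation.} We have $g'(0)=\varepsilon\operatorname{Tr}(\dot X+\dot Y)$. By \eqref{eq8} this equals $\langle(P,Q),\mathcal L(\dot X,\dot Y)\rangle$. The differentiated constraints \eqref{eq9} say exactly that
\[
\mathcal L(\dot X,\dot Y)=\Big(\sum_i\alpha_ih_iA_i,\;-\sum_i\beta_ih_iA_i\Big).
\]
Pairing with $(P,Q)$ and using $\langle P,A_i\rangle=p_i$ and $\langle Q,A_i\rangle=q_i$ gives $g'(0)=\sum_ih_i(\alpha_ip_i-\beta_iq_i)$, which is the gradient formula in \eqref{eq10}.

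\textbf{Second variation.} Differentiate the tight constraints twice.
\begin{itemize}
\item By \eqref{eq:inverse-expansion}, the second derivative of $X(s)^{-1}$ at $0$ is $-X^{-1}\ddot XX^{-1}+2X^{-1}\dot XX^{-1}\dot XX^{-1}$.
\item $S$ is linear in $s$, so it contributes nothing.
\item The second derivative of $\eta_{x+sh}(Y(s))$ is $c\sum_i\psi_i''h_i^2b_iA_i+2c\sum_i\psi_i'h_i\,\tilde b_iA_i+\eta(\ddot Y)$, where $\tilde b_i:=v_i^*\dot Yv_i$.
\end{itemize}
Writing $\tilde a_i:=v_i^*\dot Xv_i$ for the analogous quantity in the second constraint, this yields
\[
\mathcal L(\ddot X,\ddot Y)=\Big(2X^{-1}\dot XX^{-1}\dot XX^{-1}+\sum_ic\,h_i(\psi_i''h_ib_i+2\psi_i'\tilde b_i)A_i,\;\;2Y^{-1}\dot YY^{-1}\dot YY^{-1}+\sum_ic\,h_i(\psi_i''h_ia_i+2\psi_i'\tilde a_i)A_i\Big).
\]
Pairing with $(P,Q)$ through \eqref{eq8} gives
\[
g''(0)=2E_h-c\sum_i(-\psi_i'')h_i^2(b_ip_i+a_iq_i)+2c\sum_i\psi_i'h_i(\tilde b_ip_i+\tilde a_iq_i).
\]

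\textbf{Main obstacle: the cross term.} The last sum must be absorbed into $O(E_h)$ plus half of the concave term. I would proceed as follows.
\begin{enumerate}
\item Write $W:=Y^{-1/2}\dot YY^{-1/2}$. Cauchy--Schwarz gives $\tilde b_i^2\le b_i\,w_i$, where $w_i:=v_i^*Y^{1/2}W^2Y^{1/2}v_i$.
\item By AM--GM,
\[
2c|\psi_i'h_i\tilde b_ip_i|\le\tfrac12c(-\psi_i'')h_i^2b_ip_i+2c\,\frac{(\psi_i')^2}{-\psi_i''}\,p_iw_i .
\]
\item For $\psi=(1-x^2)^{1/3}$, a direct computation gives $(\psi')^2/(-\psi'')\le\psi/2$.
\item Hence the remainder is at most $\sum_ic\psi_ip_iw_i$. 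Since $A_iPA_i=p_iA_i$, we have $\sum_ic\psi_ip_iA_i=\eta(P)$. Therefore
\[
\sum_ic\psi_ip_iw_i=\operatorname{Tr}\big(W^2\,Y^{1/2}\eta(P)Y^{1/2}\big).
\]
\item By \eqref{eq7}, $Y\eta(P)Y\preceq Q$, so $Y^{1/2}\eta(P)Y^{1/2}\preceq Y^{-1/2}QY^{-1/2}$. Thus the trace above is at most $\operatorname{Tr}(QY^{-1}\dot YY^{-1}\dot YY^{-1})\le E_h$.
\end{enumerate}
The $\tilde a_iq_i$ terms are handled symmetrically, using $X\eta(Q)X\preceq P$. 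Combining these bounds gives $g''(0)\le O(E_h)-\tfrac c2\sum_i(-\psi_i'')h_i^2(b_ip_i+a_iq_i)$, which is \eqref{eq10}. The key point is recognizing $\eta(P)$ inside the Cauchy--Schwarz remainder and dominating it via the stationarity identity; I expect this to be the only delicate step.
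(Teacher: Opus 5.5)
Your proof is correct and follows the paper's argument essentially step for step. You use the same feasible-path comparison $g\ge R$, the same pairing with $(P,Q)$ via \eqref{eq8}, and the same control of the cross term through Cauchy--Schwarz, the bound $(\psi')^2\le\tfrac12\psi(-\psi'')$, and the stationarity identity \eqref{eq7}, arriving at the identical bound $3E_h-\tfrac c2\sum_i(-\psi_i'')h_i^2(b_ip_i+a_iq_i)$; the only cosmetic difference is that you apply AM--GM termwise and then recognize $\eta(P)$ via $Y\eta(P)Y\preceq Q$, whereas the paper first lower-bounds $E_h$ by $c\sum_i\psi_i(\cdots)$ using $X^{-1}PX^{-1}=\eta(Q)+\varepsilon I$ and then applies a global weighted Cauchy--Schwarz.
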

\begin{proof}
    Let $J(s) = t + \eps \operatorname{Tr}(X(s) + Y(s))$. By the feasible-path construction, we have $J(s) \geq R(x + sh)$, with equality at $s = 0$. Hence the smooth function $J(s) - R(x + sh)$ has a local minimum at zero, and so
    \[
    J'(0) = \nabla R \cdot h, \qquad
    J''(0) \geq h^\top \nabla^2 R h.
    \]
    Applying \eqref{eq8} to $(\Dot{X}, \Dot{Y})$, and then using \eqref{eq9}, gives
    \[
    J'(0) = \eps\operatorname{Tr}(\Dot{X} + \Dot{Y})
    = \langle (P, Q), \mathcal{L}(\Dot{X}, \Dot{Y}) \rangle
    = \sum_{i \in [n]} (\alpha_i p_i - \beta_i q_i) h_i.
    \]
    Since this holds for every $h$, the gradient formula follows.

    For the second derivative, write $\Ddot{X} = X''(0)$ and $\Ddot{Y} = Y''(0)$. Differentiating the first tight constraint twice gives
    \[
    X^{-1}\Ddot{X}X^{-1} - \eta_x(\Ddot{Y}) =
    2X^{-1}\Dot{X}X^{-1}\Dot{X}X^{-1} + 2c\sum_{i\in [n]} \psi_i' h_i A_i \Dot{Y} A_i + c \sum_{i \in [n]} \psi_i''h_i^2 b_i A_i.
    \]
    The second constraint gives the same equation with $X, Y$ interchanged and $b_i$ replaced by $a_i$.
    
    Now pair these two equations with $P, Q$. By \eqref{eq8}, their left-hand sides sum to $\eps\operatorname{Tr}(\Ddot{X} + \Ddot{Y}) = J''(0)$. On the right, the inversion terms sum to $2E_h$, while the rank-one identities give
    \[
    J''(0) = 2E_h + 2c\sum_{i \in [n]} \psi_i'h_i(p_i v_i^* \Dot{Y} v_i + q_i v_i^* \Dot{X} v_i) + c\sum_{i\in [n]} \psi_i'' h_i^2 (b_i p_i + a_i q_i).
    \]
    The last term is negative; we must bound the mixed term without using all of this negative contribution. For the $X$-part of $E_h$, \eqref{eq7} gives
    \[
    \operatorname{Tr}(PX^{-1}\Dot{X}X^{-1}\Dot{X}X^{-1})
    = \operatorname{Tr}((\eta_x(Q) + \eps I)\Dot{X}X^{-1}\Dot{X})
    \geq c\sum_{i \in [n]} \psi_i q_i v_i^*\Dot{X}X^{-1}\Dot{X}v_i.
    \]
    Here we dropped the nonnegative term $\eps\operatorname{Tr}(\Dot{X}X^{-1}\Dot{X})$. We may likewise apply the same calculation to the $Y$-part to get
    \[
    E_h \geq c\sum_{i \in [n]} \psi_i (p_i v_i^*\Dot{Y}Y^{-1}\Dot{Y}v_i + q_i v_i^* \Dot{X}X^{-1}\Dot{X}v_i).
    \]
    By Cauchy-Schwarz, we get
    \[
    |v_i^* \Dot{X} v_i|^2 \leq a_i v_i^* \Dot{X}X^{-1}\Dot{X}v_i,
    \qquad
    |v_i^* \Dot{Y} v_i|^2 \leq b_i v_i^* \Dot{Y}Y^{-1}\Dot{Y}v_i.
    \]
    Applying weighted Cauchy-Schwarz to the mixed sum, with the preceding lower bound on $E_h$, bounds its absolute value by $2\sqrt{E_h B_h}$, where
    \[
    B_h = c\sum_{i \in [n]} \frac{(\psi_i')^2}{\psi_i}h_i^2(b_ip_i + a_i q_i).
    \]
    For $\psi(x) = (1-x^2)^{1/3}$,
    \[
    -\psi''(x) = \frac{2}{3}(1-x^2)^{-5/3}\biggl(1 + \frac{x^2}{3}\biggr),
    \qquad
    \frac{\psi'(x)^2}{\psi(x)(-\psi''(x))}
    = \frac{2x^2}{3 + x^2} \leq \frac{1}{2}.
    \]
    Consequently, $B_h \leq (c/2)\sum_{i \in [n]} (-\psi_i'')h_i^2(b_ip_i + a_iq_i)$. Using $2\sqrt{E_hB_h} \leq E_h + B_h$ in the expression for $J''(0)$, we obtain
    \[
    J''(0) \leq 3E_h - \frac{c}{2}\sum_{i \in [n]} (-\psi_i'')h_i^2(b_ip_i + a_iq_i).
    \]
    Together with $h^\top \nabla^2 R h \leq J''(0)$, this completes the proof of \eqref{eq10}.
\end{proof}
We next construct a covariance for which the positive term $E_h$ is controlled by a diagonal expression in the coordinate increments; the two equations in \eqref{eq9} have different coefficients $\alpha_i, \beta_i$, so we first rescale the coordinates.

Note here that the complementary inequalities ensure that these coefficients are positive; indeed, if $x_i \geq 0$ then $\psi_i' \leq 0$ and \eqref{eq6} gives
\begin{equation}\label{eq11}
    \frac{1 + x_i/3}{1 + x_i} \leq \alpha_i \leq 1,
    \qquad
    1 \leq \beta_i \leq \frac{1 - x_i/3}{1 - x_i}.
\end{equation}
For $x_i < 0$, replacing $x_i$ by $-x_i$ and interchanging $a_i, b_i$ exchanges $\alpha_i, \beta_i$; in particular, $\alpha_i\beta_i = \Omega(1)$ for every active coordinate.

Now choose positive scales satisfying
\begin{equation}
    \alpha_i = T_i \ell_i, \qquad \beta_i = T_i m_i, \qquad \ell_i m_i = \psi_i.
\end{equation}
The first two identities give a common rescaled increment $H_i = T_ih_i$ in both response equations, with the third matching the coefficient $\psi_i$ in $\eta_x$. Solving these identities gives
\[
\ell_i = \sqrt{\frac{\psi_i\alpha_i}{\beta_i}}, \qquad
m_i = \sqrt{\frac{\psi_i\beta_i}{\alpha_i}}, \qquad
T_i = \sqrt{\frac{\alpha_i\beta_i}{\psi_i}}
\]
We also write $\widetilde{a}_i = \ell_ia_i$, $\widetilde{b_i} = m_ib_i$, $z_i = \ell_i p_i$, $w_i = m_i q_i$.
\begin{lemma}[Scalar bounds]\label{lem4.2}
    Under \eqref{eq6}, the rescaled quantities satisfy
    \[
    \widetilde{a}_i + \widetilde{b}_i = O(1/c),
    \qquad T_i^{-2} = O(1),
    \qquad
    \frac{-\psi_i''}{\alpha_i\beta_i} = \Omega(1),
    \qquad
    \frac{-\psi_i''}{T_i^2} = \Omega(1), \qquad
    \frac{|\psi_i'|}{\psi_i} = O(-\psi_i'').
    \]
\end{lemma}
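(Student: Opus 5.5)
By the symmetry noted after \eqref{eq11} (replace $x_i$ by $-x_i$ and interchange $a_i,b_i$, hence $\alpha_i,\beta_i$ and $\ell_i,m_i$), I will treat only $x_i\ge 0$. All five bounds then reduce to explicit one-variable inequalities in $x:=x_i\in[0,1)$, using \eqref{eq6}, \eqref{eq11} and the formulas
\[
\psi=(1-x^2)^{1/3},\qquad \psi'=-\tfrac{2x}{3}(1-x^2)^{-2/3},\qquad -\psi''=\tfrac23(1-x^2)^{-5/3}\bigl(1+\tfrac{x^2}{3}\bigr).
\]

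\textbf{Bounds not involving $a_i,b_i$.} The fifth bound is immediate:
\[
\frac{|\psi'|}{\psi}=\frac{2x}{3(1-x^2)},\qquad \frac{2x}{3(1-x^2)}\le(1-x^2)^{-5/3}.
\]
For the third, \eqref{eq11} gives $\alpha\le 1$ and $\beta\le\frac{1-x/3}{1-x}\le\frac{2}{1-x}$, so
\[
\alpha\beta\le \frac{2}{1-x}\le\frac{4}{1-x^2}.
\]
Since $-\psi''\ge\tfrac23(1-x^2)^{-5/3}\ge\tfrac23(1-x^2)^{-1}$, this yields $-\psi''/(\alpha\beta)=\Omega(1)$. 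For the second, $T^{-2}=\psi/(\alpha\beta)$. Here $\psi\le1$, and $\alpha\beta=\Omega(1)$ because $\beta\ge1$ and $\alpha\ge\frac{1+x/3}{1+x}\ge\frac23$. The fourth bound is $-\psi''/T^2=\psi\cdot\frac{-\psi''}{\alpha\beta}$, which is the step needing most care. I would check whether the third bound holds with slack $(1-x^2)^{-2/3}$. Indeed, $-\psi''\gtrsim(1-x^2)^{-5/3}$ while $\alpha\beta\lesssim(1-x^2)^{-1}$, so
\[
\frac{-\psi''}{\alpha\beta}\gtrsim(1-x^2)^{-2/3}=\psi^{-2},
\]
and then $\psi\cdot\psi^{-2}=\psi^{-1}\ge1$. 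So the fourth bound follows as well.

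\textbf{The bound on $\tilde a_i+\tilde b_i$.} This is the part that genuinely uses \eqref{eq6}. For $\tilde b=m b$, write
\[
m b=\sqrt{\frac{\psi\beta}{\alpha}}\,b .
\]
By \eqref{eq6}, $b<\frac{1-x}{c\psi}$, so
\[
\tilde b<\frac{1-x}{c}\sqrt{\frac{\beta}{\alpha\psi}} .
\]
With $\beta\lesssim(1-x)^{-1}$, $\alpha\gtrsim1$ and $\psi^{-1}\lesssim(1-x)^{-1/3}$, this is at most
\[
\frac1c(1-x)\,(1-x)^{-1/2}(1-x)^{-1/6}=\frac1c(1-x)^{1/3}=O(1/c).
\]
For $\tilde a=\ell a$, use $a<\frac{1+x}{c\psi}$ and $\ell=\sqrt{\psi\alpha/\beta}\le\sqrt\psi$, which gives
\[
\tilde a<\frac{1+x}{c\sqrt\psi}.
\]
This blows up like $(1-x)^{-1/6}$ as $x\to1$, so the crude bound fails and this is the main obstacle.

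To fix it, I would use the lower bound on $\beta$ coming from \eqref{eq6} itself. Since $\beta=1+c|\psi'|a$, we have $a=(\beta-1)/(c|\psi'|)$, and hence
\[
\tilde a=\sqrt{\frac{\psi\alpha}{\beta}}\cdot\frac{\beta-1}{c|\psi'|}\le\frac{\sqrt{\psi\beta}}{c|\psi'|}.
\]
With $\beta\lesssim(1-x)^{-1}$, $\psi\approx(1-x)^{1/3}$ and $|\psi'|\gtrsim(1-x)^{-2/3}$ for $x\ge\tfrac12$, this gives
\[
\tilde a\lesssim\frac1c(1-x)^{1/6-1/2+2/3}=\frac1c(1-x)^{1/3}=O(1/c).
\]
For $x\le\tfrac12$ the crude bound $\tilde a<\frac{1+x}{c\sqrt\psi}$ already gives $O(1/c)$, since $\psi$ is bounded below there. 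Combining the two ranges of $x$ completes the first bound, and with the symmetry argument all five bounds of \Cref{lem4.2} hold with absolute constants.
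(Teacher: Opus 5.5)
Your proof is correct, and it follows a somewhat different route from the paper only for the first bound. For the four bounds that do not involve $a_i,b_i$ you argue as the paper does. You read off $\Omega(1)\le\alpha_i\beta_i\le O(1/r)$ from \eqref{eq11} with $r=1-x_i^2$, and compare it with $-\psi_i''=\Theta(r^{-5/3})$ and $|\psi_i'|/\psi_i=O(1/r)$.

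The difference lies in the bound on $\widetilde a_i+\widetilde b_i$. The paper introduces the normalized slacks $u=c\psi a/(1+x)$ and $v=c\psi b/(1-x)$, both in $(0,1)$. It then writes the exact identities $c^2\widetilde a^2=r^{2/3}u^2\frac{1+x-2xv/3}{1-x+2xu/3}$ and $c^2\widetilde b^2=r^{2/3}v^2\frac{1-x+2xu/3}{1+x-2xv/3}$. It maximizes these by monotonicity, at $u=1,v=0$ and at $u=v=1$ respectively. This gives both bounds uniformly in $x$ with explicit constants.

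You instead chain one-sided inequalities. For $\widetilde a$ you recover the needed cancellation through $a=(\beta-1)/(c|\psi'|)$. This forces the split at $x=1/2$, since $\psi'$ vanishes at $0$.

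The mechanism is the same in both proofs. The paper's denominator is $1-x+2xu/3=(1-x)\beta$, so a large $a$ forces a large $\beta$, which shrinks $\ell=\sqrt{\psi\alpha/\beta}$. Your version makes explicit why the naive bound $\widetilde a<(1+x)/(c\sqrt{\psi})$ loses a factor $(1-x)^{-1/6}$. The paper's version avoids the case analysis and treats $\widetilde a$ and $\widetilde b$ by one computation.
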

\begin{proof}
    Fix a coordinate and omit its index for ease of notation. By the symmetry used above, assume $x \geq 0$, and put
    \[
    r = 1 -x^2, \qquad u = \frac{c\psi a}{1 + x}, \qquad v = \frac{c\psi b}{1-x}.
    \]
    By \eqref{eq6}, $u, v \in (0, 1)$. Substituting into the definitions then gives
    \[
    c^2\widetilde{a}^2 = r^{2/3}u^2\frac{1+x-2xv/3}{1-x+2xu/3}, \qquad c^2\widetilde{b}^2 = r^{2/3}v^2\frac{1-x+2xu/3}{1+x-2xv/3}.
    \]
    The first expression increases with $u$ and decreases with $v$, and the second increases with both. Maximizing over $0 \leq u, v \leq 1$, we obtain
    \[
    c^2\widetilde{a}^2 \leq r^{2/3}\frac{1+x}{1-x/3} = O(1),
    \qquad
    c^2\widetilde{b}^2 \leq r^{2/3}\frac{1-x/3}{1+x/3} = O(1),
    \]
    and thus $\widetilde{a} + \widetilde{b} = O(1/c)$.

    The bounds in \eqref{eq11} imply $\Omega(1) \leq \alpha \beta \leq O(1/r)$. Since $\psi = r^{1/3} \leq 1$, this gives $T^{-2} = \psi/(\alpha \beta) = O(1)$. The derivative formula in \Cref{lem4.1} also gives $-\psi'' = \Theta(r^{-5/3})$, and hence
    \[
    \frac{-\psi''}{\alpha\beta} = \Omega(r^{-2/3}), \qquad \frac{-\psi''}{T^2} = \frac{\psi(-\psi'')}{\alpha\beta} = \Omega(r^{-1/3}).
    \]
    Both lower bounds are $\Omega(1)$, since $0 < r \leq 1$. Finally, $|\psi'|/\psi = O(1/r) = O(-\psi'')$.
\end{proof}
The covariance will be chosen in the rescaled coordinates $H_i = T_ih_i$; the following statement separates the guarantee on its energy from its quantitative movement bounds.
\begin{lemma}[Covariance]\label{lem4.3}
    There exists $1 \leq C_1 \leq \mathsf{poly}(N, d)$ independent of the current state, such that there is a centered, finitely supported random direction $h \in \R^n$ satisfying, for $\omega_i = T_i^2\E h_i^2$,
    \begin{equation}\label{eq13}
    \E E_h \leq O(1)\sum_{i \in [n]} (\widetilde{a}_i z_i + \widetilde{b}_i w_i)\omega_i, \qquad
    \sum_{i \in [n]} \omega_i \geq \frac{n}{C_1}, \qquad \omega_i \leq C_1.
    \end{equation}
\end{lemma}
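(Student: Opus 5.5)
Rewrite \eqref{eq9} in the rescaled variables $H_i=T_ih_i$. Since $\alpha_i h_i=\ell_i H_i$ and $\beta_i h_i=m_iH_i$, the response system becomes
\[
X^{-1}\dot XX^{-1}=\sum_i \ell_iH_iA_i+\eta_x(\dot Y),\qquad Y^{-1}\dot YY^{-1}=-\sum_i m_iH_iA_i+\eta_x(\dot X).
\]
Equivalently,
\[
\mathcal L(\dot X,\dot Y)=\Bigl(\sum_i\ell_iH_iA_i,\;-\sum_im_iH_iA_i\Bigr),
\qquad
(\dot X,\dot Y)=\mathcal L^{-1}(\cdot)=\sum_{j\ge0}\mathcal T^j\Bigl(\sum_i \ell_iH_iXA_iX,\;-\sum_im_iH_iYA_iY\Bigr).
\]
The energy $E_h$ is then a quadratic form $H^\top G H$ in $H$. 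The plan is to compare $G$ with the diagonal form $\diag\bigl((\widetilde a_iz_i+\widetilde b_iw_i)\bigr)$, but only on a large subspace $L\subseteq\R^n$. Then I would invoke \Cref{lem:covariance} on $L$ to produce $U\succeq0$ with $U\preceq\kappa\diag(U)$, $U_{ii}\le1$ and $\Tr U\ge n/2$, say for $\kappa=2\cdot(\text{codim stuff})$. The random direction is taken as a finitely supported centered vector with $\E HH^\top=U$, for instance $\pm$ eigenvectors scaled by eigenvalues. In original coordinates this gives $h=T^{-1}H$ and $\omega_i=U_{ii}$. Hence $\sum_i\omega_i=\Tr U\ge n/2$ and $\omega_i\le1$, and $C_1$ is absorbed trivially. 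Note that $T_i^{-2}=O(1)$ from \Cref{lem4.2} keeps $h$ bounded.

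The main work is to bound $\E E_h=\Tr(GU)$. I would expand $E_h$ using \eqref{eq7}: $\Tr(PX^{-1}\dot XX^{-1}\dot XX^{-1})=\langle X^{-1}\dot XX^{-1},\dot X\,X^{-1}PX^{-1}X\rangle$, together with $X^{-1}PX^{-1}=\eta_x(Q)+\eps I$. The leading (``$j=0$'') part of $\dot X$ is $\sum_i\ell_iH_iXA_iX$. Its contribution to $E_h$ is $\sum_{i,k}\ell_i\ell_kH_iH_k\,\Tr(PA_iXA_k)$. The diagonal terms equal $\ell_i^2H_i^2a_ip_i=\widetilde a_i z_iH_i^2$, which is exactly the desired bound. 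The off-diagonal terms $\ell_i\ell_k\,(v_i^*Pv_k)(v_k^*Xv_i)$ are controlled in PSD order by $U\preceq\kappa\diag(U)$ only if they form a PSD kernel. Indeed they do: the matrix $M_{ik}=\ell_i\ell_k(v_i^*Pv_k)(v_k^*Xv_i)$ is a Hadamard product of PSD Gram matrices. So $\Tr(MU)\le\kappa\Tr(\diag(M)U)$, which gives the desired form with constant $\kappa$. The $Y$-part is handled the same way.

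The difficulty is the feedback terms $\mathcal T^j$ for $j\ge1$, and the cross terms between the $X$ and $Y$ responses through $\eta_x$. I would handle them via the choice of $L$. Each application of $\eta_x$ contributes $c\psi_i=c\ell_im_i$ times rank-one pieces, and by \Cref{lem4.2} the quantities $c\widetilde a_i,c\widetilde b_i=O(1)$. Thus the feedback operator, written in the rescaled basis $\{\ell_iXA_iX\}$, $\{m_iYA_iY\}$, acts like the $n\times n$ matrix $K$ with entries $c\,m_i\ell_k|v_i^*Xv_k|^2$ and similar. By the Cauchy--Schwarz inequality in \eqref{eq:rank-one}, $K$ has Hadamard/Gram structure with diagonal $c\widetilde a_i\ell_i\cdots=O(1)$ but possibly large trace. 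I would take $L$ to be the subspace orthogonal to the top $n/4$ eigenvectors of this coupling matrix (and of the analogous $P$-weighted Gram matrix). On $L$ the feedback has operator norm bounded by $O(\Tr K/n)=O(1)$, and I would need this bound to be $<1$ so that the Neumann series is dominated by its $j=0$ term. This is the step I expect to be the main obstacle: showing that the feedback, restricted to $L$, is a contraction in the right (PSD-weighted) norm with an absolute constant, uniformly in $c$. If it is not a strict contraction, I would fall back on a $\poly(N,d)$ bound, using $\|\mathcal L^{-1}\|\le\eps^{-1}$ from \Cref{claim:L-invertible}. That explains why the statement allows $C_1\le\poly(N,d)$, with the loss pushed into $\omega_i$ by shrinking the covariance, while keeping the $O(1)$ constant in the energy comparison.
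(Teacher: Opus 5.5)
Your setup is sound for the zeroth-order term. However, the step you flag as the main obstacle is the whole content of the lemma, and neither of your proposed ways through it works. In your rescaled basis, write $\dot X=\sum_i\ell_iF_iXA_iX$ and $\dot Y=\sum_im_iG_iYA_iY$. Then \eqref{eq9} becomes $F=H+BG$, $G=AF-H$, with $A_{ij}=c\ell_i\ell_j|v_i^*Xv_j|^2$ and $B_{ij}=cm_im_j|v_i^*Yv_j|^2$.

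First, restricting the input $H$ to the complement $L$ of the top eigenvectors of a coupling matrix does not make this feedback contract. The map $(F,G)\mapsto(BG,AF)$ does not preserve $\{(H,-H):H\in L\}$, so a bound on the coupling over $L$ says nothing about the later terms of the Neumann series.

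Second, even with the response under control, \Cref{lem:covariance} applied in $H$-coordinates only gives $\E E_h=\Tr(\mathcal EU)\le\kappa\sum_i(\Pi_L\mathcal E\Pi_L)_{ii}U_{ii}$, where $E_h=H^\top\mathcal EH$. Here $(\Pi_L\mathcal E\Pi_L)_{ii}$ is the energy of the full, feedback-laden response to $\Pi_Le_i$. Nothing in your argument makes this comparable to $\widetilde a_iz_i+\widetilde b_iw_i$; your Schur-product comparison applies only to the $j=0$ form.

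Third, the fallback is unavailable. Both sides of the energy inequality in \eqref{eq13} are quadratic in $h$, so shrinking the covariance cannot absorb a $\poly(N,d)$ loss in their ratio. That ratio must be an absolute constant, because it becomes $C_2$ in the proof of \Cref{lem:potential_decrease}, where it has to be beaten by $\gamma c$ with $c=40$. The $\poly(N,d)$ in $C_1$ has a different origin, explained below.

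The missing idea is to randomize the responses $(F,G)$ rather than the input $H$. With $C_F=\sum_i\ell_iF_iA_i$ and $C_G=\sum_im_iG_iA_i$, one has exactly $E_h=\Tr(PC_FXC_F)+\Tr(QC_GYC_G)$, with no Neumann series. This is a PSD form $\mathsf{M}$ in $(F,G)$ with diagonal $D_F=\diag(\widetilde a_iz_i)$, $D_G=\diag(\widetilde b_iw_i)$.

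The contraction is then imposed on $F$ and $G$ themselves, in these weighted norms. One has $\|D_G^{1/2}AD_F^{-1/2}\|_{\mathrm F}^2=O(n/c)$, and symmetrically for $B$. This uses $A_{ij}^2\le c\widetilde a_i\widetilde a_jA_{ij}$, $\widetilde a_i\widetilde b_i=O(c^{-2})$, and, crucially, the entrywise inequality $Aw\le z$ coming from the dual identity for $P$ in \eqref{eq7}. Your plan never uses that ingredient.

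Discard the $O(n/c)$ singular directions of each weighted map above $1/2$, and impose the $n$ consistency constraints $(I-A)F+(I-B)G=0$. This leaves a subspace of $\R^{2n}$ of dimension $n-O(n/c)$ on which $\|F\|_{D_F}^2+\|G\|_{D_G}^2\le O(1)\sum_i(\widetilde a_iz_i+\widetilde b_iw_i)H_i^2$. Applying \Cref{lem:covariance} on that subspace gives $\Tr(\mathsf{M}\Gamma)\le\kappa\Tr(\diag(\mathsf{M})\Gamma)$, and hence the energy bound with an absolute constant.

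The price is that the movement bounds, which are free in your setup, now need work. $\Tr\Gamma=\Omega(n)$ controls $\E\|(F,G)\|^2$, not $\E\|H\|^2$. Cancellation in $H=F-BG$ is ruled out by $\|(I-AB)^{-1}\|=O(1+N^2/\eps)$. That bound uses the $\eps X^2$ and $\eps Y^2$ terms of \eqref{eq7} to sharpen $Aw\le z$ to $Aw\le(1-s_0)z$ with $s_0=\Theta(\eps/c)$. This is where $C_1=\poly(N,d)$ comes from.
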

\begin{proof}
    Rather than choosing $h$ and then solving \eqref{eq9}, we construct the matrix responses and $h$ together. For $F, G \in \R^n$, set
    \[
    C_F = \sum_{i \in [n]} \ell_i F_i A_i,
    \qquad
    C_G = \sum_{i \in [n]} m_i G_i A_i,
    \]
    and consider the proposed responses $\Dot{X} = XC_FX$ and $\Dot{Y} = YC_GY$. Define the $n \times  n$ interaction matrices
    \[
    A_{ij} = c\ell_i\ell_j |v_i^* X v_j|^2, \qquad
    B_{ij} = cm_im_j|v_i^*Y v_j|^2,
    \]
    and observe that both matrices are symmetric, entrywise nonnegative, and positive semidefinite. The identity $\ell_im_i = \psi_i$ then gives
    \[
    \eta_x(\Dot{Y}) = \sum_{i \in [n]} \ell_i (BG)_i A_i,
    \qquad
    \eta_x(\Dot{X}) = \sum_{i \in [n]} m_i (AF)_i A_i
    \]
    Since $X^{-1} \Dot{X}X^{-1} = C_F$ and $Y^{-1} \Dot{Y}Y^{-1} = C_G$, the two equations in \eqref{eq9} are therefore satisfied whenever
    \begin{equation}\label{eq14}
    F-BG = H, \qquad AF - G = H,
    \end{equation}
    and $h_i = H_i/T_i$; once these equations hold, the invertibility of $\mathcal{L}$ identifies the proposed responses with the derivatives along $h$.

    We need two properties of $A, B$. First, observe that the dual equations give
    \begin{equation}\label{eq15}
    (Aw)_i = z_i - \eps \ell_i v_i^*X^2 v_i,
    \qquad
    (Bz)_i = w_i - \eps m_i v_i^*Y^2 v_i,
    \end{equation}
    as, indeed,
    \[
    (Aw)_i = c\ell_i\sum_{j \in [n]} \ell_j m_j q_j|v_i^*Xv_j|^2 = \ell_i v_i^* X \eta_x(Q) X v_i = \ell_i v_i^* (P - \eps X^2) v_i,
    \]
    and the identity for $Bz$ follows by interchanging the two blocks. In particular, $Aw \leq z$ and $Bz \leq w$, where these inequalities are entrywise and $z,w$ are strictly positive. Second, note that \Cref{lem4.2} gives $A_{ii} = c\widetilde{a}_i^2 = O(1/c)$ and $B_{ii} = c\widetilde{b}_i^2 = O(1/c)$. Because $A, B \succeq 0$, we have that $\|A\|, \|B\| = O(N/c)$.
    
    Substituting $\Dot{X} = XC_FX$ and $\Dot{Y} = YC_GY$ into the energy expression then gives
    \[
    E_h = \operatorname{Tr}(PC_FXC_F) + \operatorname{Tr}(QC_GYC_G)
    \]
    whenever \eqref{eq14} holds. Clearly this is a positive semidefinite quadratic form in $(F, G)$; furthermore, its coefficient on $F_i^2$ is $\ell_i^2\operatorname{Tr}(PA_iXA_i) = \ell_i^2 a_ip_i = \widetilde{a}_iz_i$, and its coefficient on $G_i^2$ is $\widetilde{b}_iw_i$. We therefore use the diagonal weights
    \[
    D_F = \diag(\widetilde{a}_i z_i), \qquad
    D_G = \diag(\widetilde{b}_i w_i), \qquad
    \|u\|_D^2 = u^\top D u.
    \]
    We will restrict $F, G$ to a large subspace on which the two interactions in \eqref{eq14} are contractions in these weighted norms. To show that only a few directions must be removed, use $|v_i^* X v_j|^2 \leq a_ia_j$ to obtain $A_{ij}^2 \leq c\widetilde{a}_i\widetilde{a}_j A_{ij}$. It follows then that
    \begin{align*}
        \left\|D_G^{1/2}AD_F^{-1/2}\right\|_F^2 &= \sum_{i, j \in [n]} \frac{\widetilde{b}_iw_i}{\widetilde{a}_jz_j} A_{ij}^2
        \leq c\sum_{j \in [n]} \frac{1}{z_j}\sum_{i \in [n]} \widetilde{a}_i\widetilde{b}_iw_i A_{ij} \leq O(c^{-1})\sum_{j \in [n]} \frac{(Aw)_j}{z_j} \leq O(n/c).
    \end{align*}
    Here the third line uses $\widetilde{a}_i\widetilde{b}_i = O(1/c^2)$ and the last uses $Aw \leq z$. Interchanging the two blocks gives $\|D_F^{1/2} B D_G^{-1/2}\|_F^2 = O(n/c)$.

    The squared singular values of a matrix sum up to its squared Frobenius norm, and thus each weighted map has only $O(n/c)$ singular values greater than $1/2$. Restrict $\smash{D_F^{1/2}F}$ to the right singular subspace of $\smash{D_G^{1/2}AD_F^{-1/2}}$ with singular values at most $1/2$, and impose the analogous restriction on $\smash{D_G^{1/2}G}$. These are linear restrictions of total codimension $O(n/c)$, and they ensure
    \begin{equation}\label{eq16}
    \|AF\|_{D_G} \leq \frac{1}{2}\|F\|_{D_F},
    \qquad
    \|BG\|_{D_F} \leq \frac{1}{2}\|G\|_{D_G}.
    \end{equation}
    For \eqref{eq14} to define a common vector $H$, we also require $F - BG = AF - G$, or, equivalently, $(I-A)F + (I-B)G = 0$; this imposes at most $n$ additional linear constraints. Let $\mathcal{G} \subseteq \R^{2n}$ be the subspace satisfying all these restrictions. Then $\dim \mathcal{G} \geq 2n - O(n/c) - n = n - O(n/c)$. For every $(F, G) \in \mathcal{G}$, define $H$ by \eqref{eq14}. Those equations can be rewritten as
    \[
    (F, G) = (H, -H) + (BG, AF).
    \]
    Using the product norm determined by $D_F, D_G$, and then \eqref{eq16}, gives
    \[
    \sqrt{\|F\|_{D_F}^2 + \|G\|_{D_G}^2}
    \leq \sqrt{\sum_{i \in [n]} (\widetilde{a}_i z_i + \widetilde{b}_iw_i)H_i^2} + \frac{1}{2}\sqrt{\|F\|_{D_F}^2 + \|G\|_{D_G}^2}.
    \]
    Moving the last term to the left and squaring both sides then yields
    \begin{equation}\label{eq17}
        \|F\|_{D_F}^2 + \|G\|_{D_G}^2 \leq O(1) \sum_{i \in [n]} (\widetilde{a}_i z_i + \widetilde{b}_iw_i)H_i^2.
    \end{equation}
    We now choose a random pair in $\mathcal{G}$. Apply \Cref{lem:covariance} with a sufficiently large absolute $\kappa$; in ambient dimension $2n$, it gives $\Gamma \succeq 0$ satisfying $\mathrm{range}(\Gamma) \subseteq \mathcal{G}$, $\Gamma_{jj} \leq 1$, $\Gamma \preceq \kappa\diag(\Gamma)$, and
    \begin{equation}\label{eq18}
        \operatorname{Tr}{\Gamma} \geq \dim{\mathcal{G}} - \frac{2n}{\kappa} \geq n - O(n/c) - \frac{2n}{\kappa} = \Omega(n),
    \end{equation}
    where the last inequality holds for sufficiently large absolute $c, \kappa$.

    Take $(F, G) = \Gamma^{1/2}\xi$, where $\xi \in \{\pm 1\}^{2n}$ has independent uniform coordinates. This pair is centered, finitely supported, has covariance $\Gamma$, and lies in $\mathcal{G}$. Define $H$ by \eqref{eq14} and $h_i = H_i/T_i$. Then $h$ is also centered and finitely supported, and $\omega_i = \E H_i^2$.

    Now to verify the energy guarantee, let $\mathsf{M} \succeq 0$ be the matrix of the quadratic form $\operatorname{Tr}(PC_FXC_F) + \operatorname{Tr}(QC_GYC_G)$. Its diagonal is $(D_F, D_G)$, as computed above. Since the covariance of $(F, G)$ is $\Gamma$,
    \[
    \E E_h = \operatorname{Tr}(\mathsf{M}\Gamma) \leq \kappa \operatorname{Tr}(\mathsf{M}\diag(\Gamma))
    = \kappa \E\bigl[\|F\|_{D_F}^2 + \|G\|_{D_G}^2\bigr] \leq O(1)\sum_{i \in [n]} (\widetilde{a}_iz_i + \widetilde{b}_iw_i)\omega_i.
    \]
    (Here the inequality on the second line follows from $\mathsf{M} \succeq 0$ and $\Gamma \preceq \kappa \diag(\Gamma)$; the last line is \eqref{eq17}.)

    It remains then to establish the movement bounds. \eqref{eq18} controls $\E[\|F\|^2 + \|G\|^2]$ but does not yet control $\E \|H\|^2$; the two terms defining $H$ in \eqref{eq14} could nearly cancel. We'll rule out excessive cancellation using the regularization terms in \eqref{eq15}. The normalization gives $\|S\| \leq 2$, $\nu \leq 1$, and $d\eps = \nu$, so the upper bound from \Cref{lem:matrix_potential} implies that
    \[
    R(x) \leq \|S\| + 2\sqrt{c\nu + 2d\eps} \leq 2 + 2 \sqrt{(c+2)\nu} = O(\sqrt{c}).
    \]
    This upper bound follows by testing scalar matrices in the defining optimization. Since $t \leq R(x)$, the constraints give $X^{-1}, Y^{-1} \preceq (t + \|S\|)I \preceq O(\sqrt{c})I$, and hence $X, Y \succeq \Omega(c^{-1/2})I$. Also, \eqref{eq7} implies $P, Q \preceq I$, and thus for a sufficiently small absolute constant $\gamma_0 > 0$,
    \[
    \eps\ell_iv_i^*X^2v_i \geq \frac{\gamma_0\eps}{c}\ell_i\|v_i\|^2 \geq \frac{\gamma_0 \eps}{c}z_i,
    \]
    and the analogous inequality holds for the $Y$-term in \eqref{eq15}. Consequently,
    \begin{equation}\label{eq19}
        Aw \leq (1-s_0)z,\qquad Bz \leq (1-s_0)w, \qquad s_0 = \gamma_0\eps/c \in (0, 1).
    \end{equation}
    Since $A$ is entrywise nonnegative, \eqref{eq19} gives $ABz \leq (1-s_0)^2z$. Conjugating $AB$ by the positive diagonal matrix $\diag(z)$ produces a nonnegative matrix whose row sums are at most $(1-s_0)^2$; its spectral radius, and therefore that of $AB$, is at most $(1-s_0)^2$.

    The matrix $A^{1/2}B A^{1/2}$ is positive semidefinite and has the same nonzero eigenvalues as $AB$, hence
    \[
    \bigl\|(I - A^{1/2}BA^{1/2})^{-1}\bigr\| \leq
    \frac{1}{1-(1-s_0)^2} = O(1/s_0).
    \]
    Using the identity $(I - AB)^{-1} = I + A^{1/2}(I - A^{1/2}BA^{1/2})^{-1}A^{1/2}B$, we obtain
    \begin{align*}
        \bigl\|(I - AB)^{-1}\bigr\| &\leq 1 + O\biggl(\frac{\|A\|\|B\|}{s_0}\biggr) = O\biggl(1 + \frac{N^2}{c\eps}\biggr)
        \leq O(1 + N^2/\eps),
    \end{align*}
    and the same bound holds with $A, B$ interchanged. 
    
    We can now solve \eqref{eq14} quantitatively. Substituting  $G = AF - H$ into $F = H + BG$, and conversely substituting $F = H + BG$ into $G = AF - H$, gives
    \[
    F = (I-BA)^{-1}(I-B)H, \qquad G = (I-AB)^{-1}(A-I)H.
    \]
    The inverse bounds and $\|A\|, \|B\| = O(N/c)$ imply $\|F\| + \|G\| \leq O((1+N)^3\eps^{-1})\|H\|$, where we used $0 < \eps \leq 1$. Squaring and taking expectations then yields
    \[
    \operatorname{Tr}{\Gamma} = \E[\|F\|^2 + \|G\|^2] \leq O((1+N)^6\eps^{-2})\E\|H\|^2
    \implies \sum_{i \in [n]} \omega_i = \E\|H\|^2 \geq \frac{n}{C_1},
    \]
    since $\operatorname{Tr}{\Gamma} = \Omega(n)$. Finally, $\Gamma \preceq \kappa I$ since $\Gamma \preceq \kappa \diag(\Gamma) \preceq \kappa I$, and $H = [I, -B](F, G)$. Therefore
    \begin{align*}
        \omega_i &= e_i^\top [I, -B]\Gamma[I, -B]^\top e_i \leq \kappa e_i^\top (I + B^2)e_i\\
        &\leq \kappa(1 + \|B\|^2) = O((1+N)^2) \leq C_1,
    \end{align*}
    and so we have established all three guarantees in \eqref{eq13}.
\end{proof}
We now have enough machinery to prove \Cref{lem:potential_decrease}.
\begin{proof}[Proof of \Cref{lem:potential_decrease}]
    The endpoint alternative was established above, so assume \eqref{eq6}. Let $h$ be the random direction from \Cref{lem4.3}, and set $\Sigma = \E[hh^\top]$. Then $\Sigma \succeq 0$, and for every real symmetric matrix $M$, we have $\E[h^\top M h] = \operatorname{Tr}(M\Sigma)$. We first combine the bound on the curvature with the covariance guarantee. The rescaling identities give
    \[
    \psi_i(b_ip_i + a_iq_i) = \widetilde{b}_iz_i + \widetilde{a}_iw_i, \qquad \E h_i^2 = \frac{\omega_i}{T_i^2} = \frac{\psi_i \omega_i}{\alpha_i \beta_i}.
    \]
    Consequently,
    \[
    \E\sum_{i \in [n]} (-\psi_i'')h_i^2(b_ip_i + a_iq_i) = \sum_{i \in [n]} \frac{-\psi_i''}{\alpha_i \beta_i}(\widetilde{a}_iw_i + \widetilde{b}_iz_i)\omega_i \geq \Omega(1)\sum_{i \in [n]} (\widetilde{a}_iw_i + \widetilde{b}_iz_i)\omega_i,
    \]
    where the last inequality is \Cref{lem4.2}. Applying \Cref{lem4.1} and then the energy bound in \Cref{lem4.3}, we obtain absolute constants $C_2, \gamma > 0$ such that 
    \begin{equation}\label{eq21}
        \frac{1}{2}\operatorname{Tr}(\nabla^2 R\Sigma) \leq C_2\sum_{i \in [n]}(\widetilde{a}_iz_i + \widetilde{b}_iw_i)\omega_i - \gamma c \sum_{i \in [n]} (\widetilde{a}_iw_i + \widetilde{b}_iz_i)\omega_i.
    \end{equation}
    The factor $c$ in the negative term is useful only after the two sums have been put in the same form; their weights need not be comparable coordinate by coordinate. Their difference, however, factors as $(\widetilde{a}_iz_i + \widetilde{b}_iw_i)-(\widetilde{a}_iw_i + \widetilde{b}_iz_i) = (\widetilde{a}_i - \widetilde{b}_i)(z_i - w_i)$. By \Cref{lem4.1}, $\partial_i R = \alpha_i p_i - \beta_i q_i = T_i(z_i - w_i)$, and thus this difference can be cancelled by a drift. Choose
    \begin{equation}\label{eq22}
        \mu_i = -\frac{C_2}{T_i}(\widetilde{a}_i - \widetilde{b}_i) \omega_i.
    \end{equation}
    Its contribution to $R$ is 
    \[
    \nabla R \cdot \mu = -C_2 \sum_{i \in [n]} (\widetilde{a}_i - \widetilde{b}_i)(z_i - w_i)\omega_i,
    \]
    and adding this identity to \eqref{eq21} replaces the positive sum by the crossed weights appearing in the negative sum.

    For the rest of the proof, write
    \[
    \mathcal{D}[f] = \nabla f \cdot \mu + \frac{1}{2}\operatorname{Tr}(\nabla^2f\, \Sigma).
    \]
    The preceding cancellation gives, for sufficiently large $c$,
    \[
    \mathcal{D}[R] \leq (C_2 - \gamma c)\sum_{i \in [n]} (\widetilde{a}_iw_i + \widetilde{b}_i z_i)\omega_i \leq 0.
    \]
    To obtain a uniform amount of decrease, we now use $\Phi$; we must first check that the drift in \eqref{eq22} does not consume the decrease supplied by its negative second derivatives. First, the rescaling and the definitions of $\alpha_i, \beta_i$ give the cancellation
    \[
    T_i(\widetilde{a}_i + \widetilde{b}_i) = \alpha_i a_i + \beta_i b_i = (1+c\psi_i'b_i)a_i + (1-c\psi_i'a_i)b_i = a_i + b_i.
    \]
    It follows from \eqref{eq22} that
    \[
    |\mu_i| \leq \frac{C_2}{T_i}(\widetilde{a}_i + \widetilde{b}_i)\omega_i = C_2\frac{a_i + b_i}{T_i^2}\omega_i.
    \]
    Adding the complementary inequalities in \eqref{eq6} gives $a_i + b_i < 2/(c\psi_i)$, and using this with \Cref{lem4.2} gives
    \[
    |\psi_i' \mu_i| \leq O(1/c) \frac{|\psi_i'|}{\psi_i}\frac{\omega_i}{T_i^2} \leq O(1/c)(-\psi_i'')\frac{\omega_i}{T_i^2},
    \]
    and thus the drift costs only a $O(1/c)$-fraction of the available coordinate curvature. Because $\Phi = \sum_i \psi_i$, its Hessian is diagonal; moreover, $\Sigma_{ii} = \omega_i/T_i^2$, and therefore
    \begin{align*}
        \mathcal{D}[\Phi] &= \sum_{i \in [n]} \psi_i'\mu_i - \frac{1}{2}\sum_{i \in [n]} (-\psi_i'')\frac{\omega_i}{T_i^2} \leq -\biggl(\frac{1}{2} - O(1/c)\biggr)\sum_{i \in [n]} (-\psi_i'')\frac{\omega_i}{T_i^2}\\
        &\leq -\Omega(1)\sum_{i \in [n]} \omega_i \leq -\Omega(n/C_1).
    \end{align*}
    (Here $c$ is chosen sufficiently large for the coefficient on the second line to be positive; the third line uses $(-\psi_i'')/T_i^2 = \Omega(1)$ from \Cref{lem4.2}; and the last line uses the movement bound in \Cref{lem4.3}.) It remains then to verify the size bounds on $\mu, \Sigma$. The same cancellation gives
    \[
    \frac{\widetilde{a}_i + \widetilde{b}_i}{T_i} =
    \frac{\psi_i(a_i + b_i)}{\alpha_i \beta_i} = O(1/c),
    \]
    using \eqref{eq6} and $\alpha_i\beta_i = \Omega(1)$. Together with $1/T_i^2 = O(1)$ and $\omega_i \leq C_1$, this implies
    \[
    \|\mu\|_\infty = O(C_1), \qquad
    \operatorname{Tr}{\Sigma} = \sum_{i \in [n]}\frac{\omega_i}{T_i^2} = O(C_1n).
    \]
    Finally, $\mathcal{D}[\Psi] = \mathcal{D}[R] + \lambda\mathcal{D}[\Phi] \leq -\Omega(\lambda n/C_1)$, and taking $C_0$ to be a sufficiently large absolute multiple of $C_1$ proves all three bounds in the second alternative.

    We now conclude by showing that this certificate corresponds to a local decrease in expectation. Consider $\Delta = \sqrt{\theta} h + \theta \mu$. Because $h$ has finite support and the current state lies in an open face, all points $x + \Delta$ remain in that face for sufficiently small $\theta > 0$. Since $\E h = 0$, we have $\E\Delta = \theta \mu$ and $\E[\Delta\Delta^\top] = \theta \Sigma + \theta^2 \mu\mu^\top$. The Taylor expansion in a fixed neighbourhood of $x$ therefore gives
    \begin{align*}
        \E\Psi(x + \sqrt{\theta}h + \theta \mu) &= \Psi(x) + \theta\left(\nabla \Psi \cdot \mu + \frac{1}{2}\operatorname{Tr}(\nabla^2 \Psi\, \Sigma)\right) + O_x(\theta^{3/2})\\
        &= \Psi(x) + \theta\,\mathcal{D}[\Psi] + O_x(\theta^{3/2}).
    \end{align*}
    The coefficient of $\theta$ is strictly negative, so the above is smaller than $\Psi(x)$ for sufficiently small $\theta$.
\end{proof}

\subsection*{Acknowledgements}
We thank Lorenzo Orecchia, Clayton Mizgerd, and Nikhil Bansal for helpful discussions.
\paragraph{Statement on AI Usage.} The idea for the proof in this article was found by Anthropic's Claude Fable 5.1 model, working from material supplied by the authors. The first-named author (EE) provided Claude with (1) notes from earlier attempts at a potential-function proof of Weaver's $\KS_2$ problem, (2) notes from other projects, including a long-running project on finite-free probability, and (3) a prompt describing at a high level how a potential-function argument should proceed followed by several long rounds of interaction and guidance; Claude subsequently translated the language used across fields and produced the proof presented here which was subsequently polished by the authors. The main ideas are connected to the supplied material: the choice of the matrix potential $R(x)$ with the coupled reservoir is closely related to notions in free probability\footnote{
The potential can be understood as a matrix-valued analogue of the inverse-Cauchy-transform barriers for finite-free convolution \cite{marcus2022finite}. Specifically, $R(x)$ equals the \emph{upper spectral edge}---namely the largest element of the spectrum, generalizing the largest eigenvalue of a Hermitian matrix---of an auxiliary operator with mean $\operatorname{diag}(S, -S)$ and off-diagonal free semicircular noise. By Lehner's variational formula \cite{lehner1999computing, parmaksiz2025computing} we may then obtain the coupled $X, Y$ optimization, and the trace penalty corresponds to adding isotropic semicircular noise.
}, and the movements used to decrease the potential, endpoint movement and random local steps, had already been considered in the earlier attempts on $\KS_2$. The authors remain fully responsible for the content of the article.

\bibliographystyle{alphaurl}
\bibliography{bib.bib}

\end{document}